\documentclass[12pt, draftclsnofoot, onecolumn]{IEEEtran}

\usepackage[normalem]{ulem}

\usepackage{amsmath,amssymb,amsfonts}
\usepackage{algorithmic}
\usepackage{graphicx}
\usepackage{textcomp}
\usepackage{caption}
\usepackage{subcaption}
\usepackage{latexsym}
\usepackage{amsmath}
\usepackage[mathscr]{euscript}
\usepackage[linesnumbered, algo2e, ruled,norelsize]{algorithm2e}
\SetArgSty{textnormal}
\usepackage{threeparttable}
\usepackage{threeparttablex}
\usepackage{commath}
\usepackage{mathtools}
\usepackage{slashbox}
\usepackage{pict2e}
\usepackage{epstopdf}
\usepackage{verbatim}
\usepackage{subfloat}
\usepackage{tabu}
\usepackage{booktabs}
\usepackage{float}
\usepackage{amssymb}
\usepackage{url}
\usepackage{multirow}
\usepackage{graphicx}
\usepackage{mathptmx}
\usepackage{cite}
\usepackage{footnote}
\usepackage{array}
\usepackage[table]{xcolor}
\usepackage{tabularx}
\usepackage{ulem}
\usepackage{array}
\usepackage{slashbox}

\newcolumntype{P}[1]{>{\centering\arraybackslash}p{#1}}

\usepackage{amsthm} 
\theoremstyle{theorem}
\newtheorem{theorem}{Theorem} 
\allowdisplaybreaks

\def\BibTeX{{\rm B\kern-.05em{\sc i\kern-.025em b}\kern-.08em
    T\kern-.1667em\lower.7ex\hbox{E}\kern-.125emX}}

\begin{document}

\title{Energy-Efficient Precoding Designs for Multi-User Visible Light Communication Systems with Confidential Messages}

\author{Son~T.~Duong,
        Thanh~V.~Pham,~\IEEEmembership{Member,~IEEE,}
        Chuyen~T.~Nguyen,
        and Anh~T.~Pham,~\IEEEmembership{Senior~Member,~IEEE}
\thanks{Son T. Duong and Chuyen T. Nguyen are with School of Electronics and Telecommunications, Hanoi University of Science
and Technology, Hanoi, Vietnam (email: duongthanhson2808@gmail.com, chuyen.nguyenthanh@hust.edu.vn).}
\thanks{Thanh V. Pham is with the Department of Electrical and Computer Engineering, McMaster University,
Hamilton L8S 4L8, Canada (e-mail: phamv12@mcmaster.ca).}
\thanks{Anh T. Pham is with the Department of Computer and Engineering, The University of Aizu,
Aizuwakamatsu 965-8580, Japan (e-mail: pham@u-aizu.ac.jp).}
\thanks{This work is supported in part by the Telecommunications Advancement Foundation (TAF) under Grant C-2020-2. This article will be presented in part at the 2021 IEEE 93rd Vehicular Technology Conference: VTC2021-Spring. \textit{(Corresponding author: Thanh V. Pham.)}}}
\maketitle
\begin{abstract}
This paper studies energy-efficient precoding designs for multi-user visible light communication (VLC) systems from the perspective of physical layer security where users' messages must be kept mutually confidential. For such systems, we first derive a lower bound on the achievable secrecy rate of each user. Next, the total power consumption for illumination and data transmission is thoroughly analyzed. We then tackle the problem of maximizing energy efficiency, given that each user's secrecy rate satisfies a certain threshold. The design problem is shown to be non-convex fractional programming, which renders finding the optimal solution computationally prohibitive. Our aim in this paper is, therefore, to find sub-optimal yet low complexity solutions. For this purpose, the traditional Dinkelbach algorithm is first employed to reformulate the original problem to a non-fractional parameterized one. Two different approaches based on the convex-concave procedure (CCCP) and Semidefinite Relaxation (SDR) are utilized to solve the non-convex parameterized problem. In addition, to further reduce the complexity, we investigate a design using the zero-forcing (ZF) technique.
Numerical results are conducted to show the feasibility, convergence, and performance of the proposed algorithms depending on different parameters of the system.
\end{abstract}

\begin{IEEEkeywords}
VLC, energy efficiency, physical layer security, Dinkelbach algorithm, convex-concave procedure, semidefinite relaxation. 
\end{IEEEkeywords}


\section{Introduction}
\label{sec:introduction}

The exponential growth of Internet traffic over the past decade has motivated a great deal of research in new wireless technologies. In this regard, visible light communication (VLC) has been recognized as one of the most promising solutions in providing high-capacity data transmission with a license-free spectrum. Aside from this, the technology also takes advantage of the high energy-efficient light-emitting diodes (LEDs), which are widely deployed for indoor illumination. This naturally enables VLC to fit into the future ubiquitous networks.

A lot of theoretical and experimental research effort has been paid to realize practical VLC systems. Nonetheless, multiple challenges remain to make VLC more viable \cite{arfaoui2020physical}, among which security in terms of information privacy and confidentiality (especially in public areas) is one of the most critical issues. Conventional security mechanisms are mainly based on key-based cryptographic algorithms performed at upper layers of the Open Systems Interconnection (OSI) model \cite{zou2016survey}. The secrecy provided by those cryptographies comes from the assumption that potential eavesdroppers possess limited computational power, which renders obtaining the secret key in a meaningful period of time infeasible. However, with the unprecedented advances in hardware technologies and recent developments of quantum computing, traditional cryptographic techniques will no longer be secure in the foreseeable future \cite{mukherjee2014principles}. This requires radically new approaches for communication security. In this respect, physical layer security (PLS) has emerged as a novel paradigm to replace and/or complement key-based cryptographies. The fundamental principle of PLS is to hide confidential messages from unauthorized users by exploiting the randomness of the wireless channels, noise, and interferences. The most promising implication of PLS is that perfect secured communication is achievable from the information-theoretic point of view. Moreover, this secrecy can be quantified by characterization of the secrecy rate that defines the maximum transmission rate at which unauthorized users are unable to extract any information from the received signals regardless of their computational capability.
\subsection{Related Works}
While PLS has been a well-investigated topic in radio frequency (RF) communications, it has only been receiving considerable attention in the past few years in the case of VLC. Compared to RF links, VLC channels exhibit several unique characteristics in terms of power constraints, making an exact expression of the secrecy rate unavailable. Considering a single-input single-output (SISO) VLC channel, which comprises a transmitter, a legitimate user, and an eavesdropper, lower and upper bounds secrecy rates were derived \cite{wang2018physical}. In the case of one transmitter and multiple receivers, including legitimate users and eavesdroppers, the non-orthogonal multiple access scheme (NOMA) was investigated to enhance the system's secrecy performance \cite{zhao2018physical}. In practical VLC systems, multiple LED luminaries should usually be deployed to provide sufficient illumination. As such, multiple-input single-output (MISO) channels are more prevalent. In such scenarios, previous works mainly focused on linear precoding as a means of secrecy enhancement. Considering a system with one legitimate user and an eavesdropper, bounds on the secrecy rate of MISO VLC channels were derived in \cite{mostafa2015physical}. Then, optimal designs to maximize the lower bound rate were solved for the specific zero-forcing (ZF) precoding under perfect and imperfect channel state information (CSI). In the case of general precoding, the solution was investigated by transforming the non-convex design problem to a solvable quasiconvex line search problem in \cite{mostafa2016optimal}. Regarding the scenario of MISO with one legitimate user and multiple eavesdroppers, the optimal precoding design for maximizing the minimum secrecy rate was examined in \cite{ma2016optimal}. In the case of multi-user MISO with an eavesdropper, \cite{pham2017secrecy} studied the problem of maximizing users' secrecy sum-rate where eavesdropper's CSI can be either known (active eavesdropper) or unknown (passive eavesdropper) to the transmitters. For systems where users' messages must be kept confidential, that is, each user considers others as potential eavesdroppers, the authors in \cite{arfaoui2018secrecy} provided a lower bound on the secrecy sum-rate. Solutions to the problems of max-min fairness, harmonic mean, proportional fairness, and weighted fairness were then presented. Most recently, ZF precoding strategies to simultaneously deal with active and passive eavesdroppers have been proposed and evaluated in \cite{Cho2021}. Considering the randomness of users and eavesdroppers, the secrecy outage probability was derived in \cite{Pan2017,Cho18Random} for the case of one transmitter and in \cite{Cho2018} for the case of multiple ones. In addition to precoding techniques, the use of artificial noise in improving PLS has also been recently studied in \cite{Pham2018,Wang2018,Arfaoui2019AN,Cho2019,pham2020energy}. While the above-mentioned works investigated the secrecy performance under the assumption that transmitters' locations are fixed, the authors in \cite{yin2017physical} studied PLS in VLC from the perspective of stochastic geometry where locations of the transmitters, legitimate users, and eavesdroppers were randomly distributed following homogeneous Poisson point processes.

In addition to the security aspect, energy efficiency (EE), which is defined as the ratio of all users' sum-rate over the total consumed power, has also been the focus of recent research in VLC. This development is propelled by the current global effort in reducing carbon footprint as it is reported that the telecommunication industry accounts for approximately 2\% of global greenhouse gas emissions. In this research direction, several studies concentrated on energy-efficient designs for hybrid VLC-RF systems relevant for indoor scenarios where RF links are usually available and used for uplink transmissions \cite{kashef2016energy,Khreishah2018,Zhang2018,Hsiao2019,Aboagye2020ICC,Aboagye2020}. For the only-VLC systems, EEs of single-input single-out (SISO), MISO, and multiple-input multiple-output (MIMO) configurations were considered in \cite{Ma2018,An2020}. While these studies provided comprehensive analyses for EE in VLC, there was a lack of investigation in the context of PLS. Although there have been several studies concerning power consumption from the perspective of PLS \cite{ma2016optimal,liu2020beamforming,pham2020energy}, their objectives were to minimize the consumed power of the information-bearing signal subject to specific secrecy requirements. This design approach, however, is not necessarily optimal from the EE perspective.
\subsection{Main Contributions}
Considering these limitations of previous studies, in this paper, we aim at optimal precoding designs to maximize the EE of multi-user VLC systems taking into account PLS criteria. Specifically, the worst-case scenario in which each user sees others as eavesdroppers is assumed. In such a case, messages for users must be kept mutually confidential. The secrecy EE (SEE) is then defined as the ratio of the secrecy sum-rate of all users over the total consumed power. A lower bound on the users' secrecy sum-rate under this requirement was presented in \cite{arfaoui2018secrecy} assuming that the noise powers at all users were the same. This assumption, nonetheless, may not be valid in practical VLC systems where receiver noise is signal-dependent. 

The first contribution of this work is to generalize the result in \cite{arfaoui2018secrecy} by deriving a lower bound for the case of unequal noise powers. A unique characteristic of VLC systems is that both illumination and communication are provided simultaneously.
Moreover, the allocated power for illumination does have an impact on the communication aspect.  As a result, to formulate the SEE, we present a detailed analysis for the total power consumption, which includes powers for data transmission, illumination, and circuitry operations. A precoding design problem is formulated to maximize the SEE, given that each user's secrecy rate must satisfy a predefined minimum threshold. The problem is shown to be non-convex fractional programming, which is challenging to be solved. To tackle it, we first utilize the well-known Dinkelbach algorithm to transform the original fractional objective function to a parameterized non-fractional one, which can be handled more conveniently. In the conference version of this paper \cite{Son2021}, we proposed an approach based on the convex-concave procedure (CCCP or CCP)\cite{yuille2003,lipp2016variations} to solve the transformed problem. Specifically, a series of variable transformations and first-order Taylor approximations were employed to convexify the parameterized objective functions and the constraints. A local solution could then be efficiently found by solving the approximate problem iteratively. Since several slack variables and approximations are used to approximate the transformed problem, the involved CCCP may require a large number of iterations to converge. To reduce the complexity, we introduce another approach, which makes use of the semidefinite relaxation technique (SDR) \cite{nesterov1998semidefinite,Luo2010}. Compared to the previous approach, the use of SDR leads to fewer constraints, enabling the CCCP to converge faster. Moreover, in solving the surrogate problem in each CCCP iteration, though the second approach suffers higher worst-case complexity than that of the first one theoretically, we show that under the considered system parameters, its software implementation practically achieves lower average running times. To further reduce the complexity, we investigate a sub-optimal design using zero-forcing (ZF) precoding.

The rest of the paper is organized as follows. Section II describes the system model, including channel and signal models. In Section III, the total consumed power and a lower bound on the achievable secrecy sum-rate of the considered system are derived, which gives rise to a formulation of the SEE. Three precoding design approaches to maximize the SEE and their complexity analysis are presented in Section IV. Section V discusses representative simulation results regarding computational complexities, feasibilities, and performances of the three proposed approaches. Finally, we conclude the paper in Section VI. 


Notation: The following notations are used throughout the paper. $\mathbb{R}^{m \times n}$ denotes the set of $m \times n$ real-valued matrices. The symbol $\mathbb{S}^n$ represents the set of $n \times n$ symmetric matrices. The lowercase symbols represent scalar variables. Bold uppercase symbols, e.g., $\mathbf{M}$, denote matrices, while  bold lowercase ones such as $\mathbf{v}$ represent column vectors. The $k$-th row vector of $\mathbf{M}$ is denoted as $[\mathbf{M}]_{k, :}$. The transpose of $\mathbf{M}$ is written as $\mathbf{M}^T$. The symbols $|\cdot|_1$, $\lVert\cdot\rVert$, and $|\cdot|$ are $\text{L}_1$ norm, the Euclidean norm, and absolute value operator respectively. The functions Tr($\cdot$) and det($\cdot$) represent the trace and determinant of a square matrix, respectively. Finally, $\text{diag}\left\{\mathbf{v}\right\}$ denotes the diagonal matrix whose diagonal entries are elements of vector $\mathbf{v}$. 

\section{System Model}
\label{sec:model}
\subsection{VLC System Description}
\begin{figure}[htbp]
\centerline{\includegraphics[width=10cm, height = 10cm]{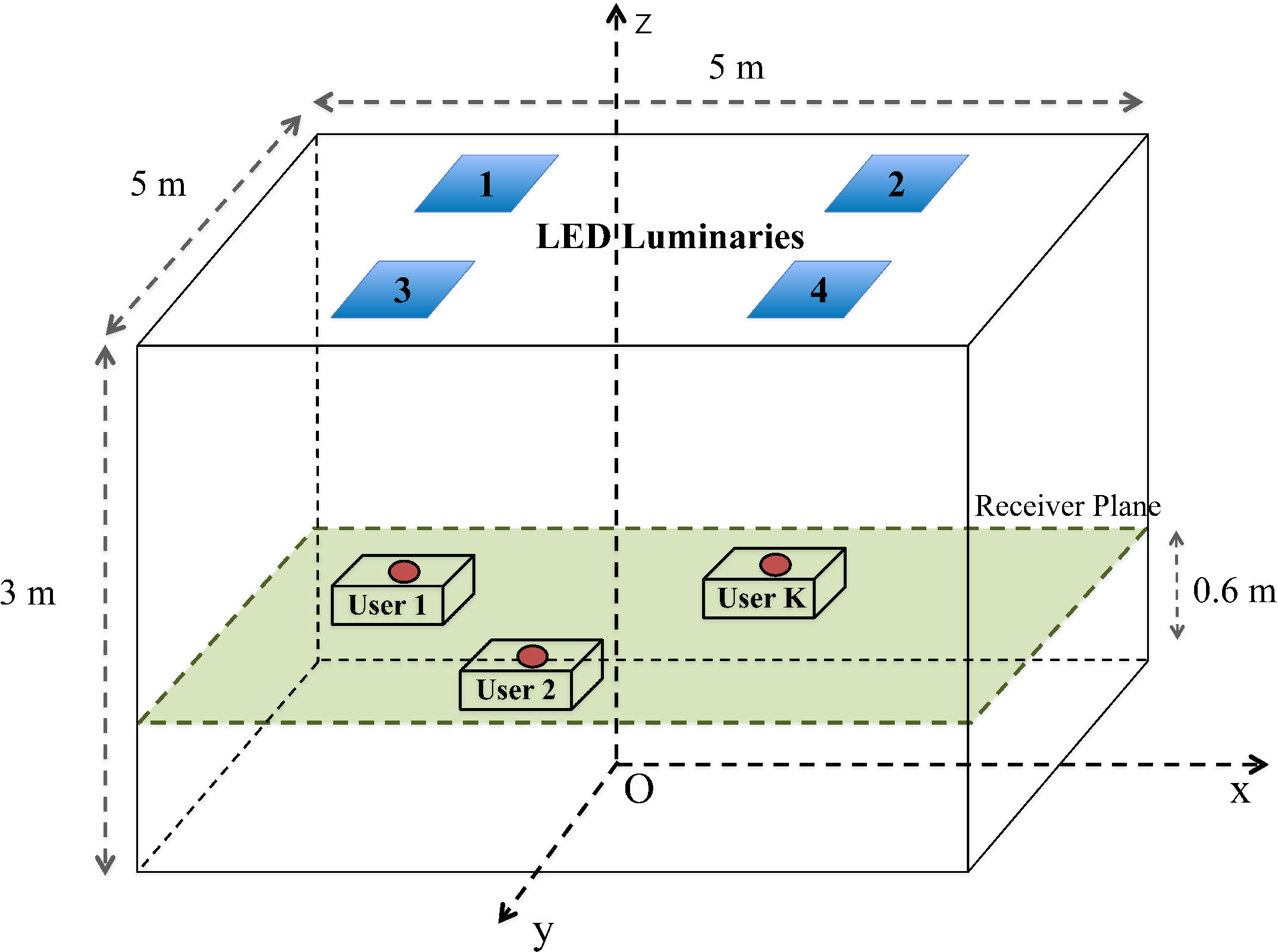}}
\caption{An example of the considered system with $N_T=4$ LED luminaries and $K=3$ users.}
\label{system_model}
\end{figure}
Our considered MU-MISO VLC system, as illustrated via a simple example in Fig.~\ref{system_model}, consists of $N_T$ LED luminaries, $K$ independent users where each user is equipped with a photodiode (PD). The LED luminaries are expected to transmit information to users confidentially. It is noted that the transmission is considered to be confidential if each user can decode only its own message, while it is not possible for the others to do so. The received signal at each user is from light-of-sight (LoS) and multiple non-light-of-sight (NLoS) directions. Nevertheless, thanks to the fact that LoS signal is strong enough compared with the NLoS one (which accounts for more than 95\% of the total received optical power at the receiver \cite{komine2004fundamental}), only LoS propagation path is studied in our work to simplify the system model. 

Let $h_{n,k}$ denote the channel state information (CSI) between the $n$-th LED luminary and the $k$-th user, which can be written as \cite{komine2004fundamental}
\begin{equation}
\label{eqn:chann_coeff}
    h_{n,k}=
    \begin{cases}
      \frac{A_r}{t_{n,k}^2}L(\phi)T_s(\psi_{n.k})g(\psi_{n,k})\cos(\psi_{n,k}) & 0 \leq \psi_{n,k} \leq \Psi, \\
      0 & \psi_{n,k} > \Psi, \\
    \end{cases}       
\end{equation}
where $A_r$ is an active area of the PD, $t_{n,k}$ is the distance between the $n$-th LED luminary and $k$-th user. $\phi$ is the angle of irradiance, $\Psi$ is the optical field of view (FOV) of the PD, while $\psi_{n,k}$ is the angle of incidence. $T_s(\psi_{n,k})$ is the gain of optical filter. $g(\psi_{n,k})$ is the gain of the optical concentrator and is given by \textcolor{blue}{\cite{komine2004fundamental}}
\begin{equation}
\label{eqn:gain_optical_concentrator}
    g(\psi_{n,k})=
    \begin{cases}
        \frac{\kappa^2}{\sin^2(\Psi)} & 0 \leq \psi_{n,k} \leq \Psi, \\
        0 & \psi_{n,k} > \Psi, \\
    \end{cases}
\end{equation}
where $\kappa$ is the refractive index of the concentrator. $L(\phi)$ is the emission intensity of Lambertian light source, which is calculated as \textcolor{blue}{\cite{komine2004fundamental}} 
\begin{equation}
    \label{eqn:Lambertian}
    L(\phi)=\frac{l+1}{2\pi}\cos^l(\phi),
\end{equation}
where $l$ is the order of Lambertian emission determined by
\begin{equation}
    \label{eqn:order_Lambertian}
    l=-\frac{\ln(2)}{\ln(\Theta_{0.5})},
\end{equation}
where $\Theta_{0.5}$ is the LED's semi-angle for half illuminance.

\subsection{Signal Model}
Let $\mathbf{d} = \begin{bmatrix}d_1 & d_2 & \cdots & d_K\end{bmatrix}^T \in \mathbb{R}^{K \times 1}$ be the vector of data symbols for all users. Assume that the symbols are drawn from an $M$-ary pulse amplitude modulation ($M$-PAM) and are modeled as a random variable (RV) $d$ following a certain distribution over $[-1, 1]$ with zero-mean and variance $\sigma^2_d$. 
An information-bearing signal $s_n$ for the $n$-th LED transmitter is generated from a linear combination of the data vector and a precoder $\mathbf{v}_n = \begin{bmatrix}w_{n,1} & w_{n,2} & \cdots & w_{n,K}\end{bmatrix} \in \mathbb{R}^{1 \times K}$ as
\begin{equation}
\label{eqn:information_bearing_signal}
    s_n = \mathbf{v}_n \ \mathbf{d}.
\end{equation}
For illumination, an DC bias $I_{n}^{\text{DC}}$ should be added to $s_n$ to create a non-negative drive current $x_n$ for the LED. The drive current, in addition, needs to be constrained to a maximum threshold, i.e., $I_{\text{max}}$, to ensure that LEDs operate normally. Therefore,   
\begin{equation}
\label{eqn:Imax}
    0 \leq x_n=s_n+I_n^{\text{DC}} \leq I_{\text{max}}.
\end{equation}
The emitted optical power of each LED luminary is given by
\begin{equation}
    {P}_n^s=\eta \left(s_n+I_n^{\text{DC}}\right),
\end{equation}
where $\eta$ as the LED conversion factor. Denote $\mathbf{h}_k = \begin{bmatrix}h_{1,k} & h_{2,k} & \cdots & h_{N_T,k}\end{bmatrix}^T$ is the $k$-th user's channel vector, where $h_{n, k}$ is the line-of-sight (LoS) channel coefficient between the $n$-th LED array and the $k$-user.
The electrical signal at the PD output is then given by
\begin{align}
\label{eqn:received_signal}
    y_k &=  \gamma \mathbf{h}^T_k\begin{bmatrix}P_1^s & P_2^s & \cdots & P^s_{N_T,k}\end{bmatrix}  + n_k \nonumber \\
        &= \gamma \eta \left(\mathbf{h}_k^T \mathbf{w}_k d_k + \mathbf{h}_k\sum_{i=1,\ i\neq k}^{K} \mathbf{w}_i d_i + \mathbf{h}_k \mathbf{I}^{\text{DC}}\right) + n_k,
\end{align}
where \textcolor{blue}{$\gamma$ is the PD responsivity}, $\mathbf{w_k} = \begin{bmatrix}w_{1,k} & w_{2, k} & \cdots & w_{N_T, k}\end{bmatrix}^T$ is the $k-$th user's precoder and $\mathbf{I}^{\text{DC}} = \begin{bmatrix}I_1^{\text{DC}} & I_2^{\text{DC}} & \cdots & I_{N_T}^{\text{DC}}\end{bmatrix}^T$. It is noted that since $|d_i| \leq 1$, we have 
\begin{align}
    \label{eqn:cond2}
    -\lVert\mathbf{v}_n\rVert_1\leq s_n\leq \lVert\mathbf{v}_n\rVert_1.
\end{align}
To ensure both (\ref{eqn:Imax}) and (\ref{eqn:cond2}), the following constraint should be satisfied
\begin{equation}
    \sum_{k=1}^K |w_{n,k}| \leq \min\left(I_n^{\text{DC}},I_{\text{max}}-I_n^{\text{DC}}\right).
    \label{eqn:linear_constraints}
\end{equation}

The receiver noise $n_k$ in \eqref{eqn:received_signal} can be modeled as a real-valued zero-mean Gaussian RV whose variance is given by \textcolor{blue}{\cite{zeng2009high}}
\begin{equation}
\label{eqn:noise_var}
    \sigma_k^2 = 2\gamma e \overline{P}_k^r B + 4\pi e A_r \gamma \chi_{\text{amb}}(1-\cos(\Psi))B + i_{\text{amp}}^2B,
\end{equation}
where $\overline{P}_k^r = \eta\mathbf{h}_k^T\mathbf{I}^{\text{DC}}$ is the average of received power at the $k$-th user, $e$ is the elementary charge, $B$ is the modulation bandwidth, $\chi_{\text{amb}}$ is the ambient light photo-current, and $i_{\text{amp}}$ is the pre-amplifier noise current density.

\section{Secrecy Energy Efficiency}
We now analyze the total consumed power at LED transmitters, which can be expressed by 
\begin{equation}
\label{eqn:power_consumption}
    P_{\text{total}} = P_{\text{DC}}+ P_{\text{AC}},
\end{equation}
where $P_{\text{DC}}$ and $P_{\text{AC}}$ are the powers of the DC and AC currents, respectively.
The DC current power includes the power used for illumination by LEDs denoted as  $P_{\text{DC, LEDs}}$ and that used by other circuit components denoted as $P_{\text{DC, circuitry}}$. While  $P_{\text{DC, circuitry}}$ can be considered to be fixed, the power consumption of LEDs can be adjusted depending on the required dimming level. However, under a specific usage when the illumination level is not changed, it is reasonable to assume that $P_{\text{DC, LEDs}}$ is fixed as well. 
The DC power consumption is then given as
\begin{equation}
\label{eqn:DC_consumption}
    P_{\text{DC}} = P_{\text{DC, LEDs}}+ P_{\text{DC,circuitry}},
\end{equation}
with $P_{\text{DC, LEDs}}$ being calculated as
\begin{equation}
\label{eqn:DC_led_consumption}
    P_{\text{DC, LEDs}} = \sum_{n=1}^{N_T} \ U_{\text{LED}} \ I_n^{\text{DC}},
\end{equation}
where $U_{\text{LED}}$ being the forward voltage of the LEDs.

The AC currents comes from the output current (or voltage) from the precoders of LED drivers, therefore can be calculated as
\begin{equation}
\label{eqn:LED_drive_power_consumption}
    P_{\text{AC}} = r \sum_{k=1}^K \sigma_d^2 \lVert\mathbf{w}_k\rVert^2,
\end{equation}
where $r$ is the equivalent resistance of the AC circuit. 
Without loss of generality, we denote $\xi = r \sigma_d^2$ as equivalent resistance, then rewrite the total power consumption as
\begin{equation}
\label{eqn:power_consumption2}
    P_{\text{total}} = P_{\text{DC}} + \xi \sum_{k=1}^K \lVert\mathbf{w}_k\rVert^2.
\end{equation}

A lower bound on the achievable secrecy rate of each user was derived in \cite{arfaoui2018secrecy} assuming the same noise power for all users. This assumption can be valid in the case that thermal noise is dominant over the signal-dependent shot noise. Nonetheless, at the high transmit optical power regime where shot noise is non-negligible, it is of necessity to characterize an achievable secrecy rate taking into account the signal-dependent noise when users' noise powers are not necessarily the same.
\begin{theorem}
\label{theorem:secrecy_rate}
    A lower bound on the achievable secrecy rate of the $k$-th user is given by
    \begin{equation}
    \label{eqn:secrecy_rate}
        \begin{split}
        R_{s,k}(\mathbf{W}) =  \frac{1}{2}\log_2\left(\frac{1+\sum_{i=1}^K a_k(\mathbf{h}_k^T \mathbf{w}_i)^2}{1+\sum_{i=1,i\neq k}^K b_k(\mathbf{h}_k^T \mathbf{w}_i)^2}\right)
                         - \frac{1}{2}\log_2\left(1+\sum_{i=1,i\neq k}^K b_i(\mathbf{h}_i^T \mathbf{w}_k)^2\right)
        \end{split}
    \end{equation}
    where $\mathbf{W}=\begin{bmatrix}\mathbf{w}_1 \ \mathbf{w}_2 \ \cdots \ \mathbf{w}_K\end{bmatrix}$, $a_k=\frac{2^{2h_d}}{2\pi e \overline{\sigma}_k^2}$ and $b_k=\frac{\sigma_d^2}{\overline{\sigma}_k^2}$ with $h_d = -\int_{-1}^1f(d)\log_2f(d){\rm{d}}x$ being the differential entropy of $d_k$ and $\overline{\sigma}_k^2 = \frac{\sigma^2_k}{\left(\gamma\eta\right)^2}$. 
\end{theorem}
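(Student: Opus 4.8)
The plan is to derive the secrecy rate lower bound via the standard wiretap coding argument $R_{s,k} \geq I(d_k; y_k) - I(d_k; \mathbf{y}_{-k})$ where $\mathbf{y}_{-k}$ collects the received signals of all other users (who act as the worst-case eavesdropper against user $k$), and then bound each mutual information term separately using the well-known bounds on the capacity of amplitude-constrained optical intensity channels. First I would normalize the received signal in \eqref{eqn:received_signal} by $\gamma\eta$, absorbing the DC term (which is deterministic and known) so that the effective noise variance becomes $\overline{\sigma}_k^2 = \sigma_k^2/(\gamma\eta)^2$; the useful channel seen by user $k$ is then $\mathbf{h}_k^T\mathbf{w}_k d_k$ corrupted by the interference $\sum_{i\neq k}\mathbf{h}_k^T\mathbf{w}_i d_i$ plus Gaussian noise.

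For the legitimate term $I(d_k;y_k)$, the plan is to lower bound it by treating the aggregate of inter-user interference and receiver noise as a single noise term. Its variance is $1 + \sum_{i\neq k} b_k(\mathbf{h}_k^T\mathbf{w}_i)^2$ after normalization (using $b_k = \sigma_d^2/\overline{\sigma}_k^2$ and $\mathrm{Var}(d_i)=\sigma_d^2$), while the output variance is $1+\sum_{i=1}^K a_k(\mathbf{h}_k^T\mathbf{w}_i)^2$-type expression — more precisely, I would use the entropy-power-based lower bound on mutual information for a non-Gaussian input with bounded differential entropy $h_d$: $I(d_k;y_k) \geq \frac{1}{2}\log_2\!\big(1 + \frac{2^{2h_d}}{2\pi e}\cdot\frac{(\mathbf{h}_k^T\mathbf{w}_k)^2}{\text{effective noise variance}}\big)$, which uses $h(d_k + \text{noise}) \geq \frac{1}{2}\log_2(2^{2h_d} + 2\pi e\,\text{noisevar})$ by the entropy power inequality, combined with the fact that $h(y_k \mid d_k)$ is at most the Gaussian entropy of the interference-plus-noise. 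Collecting terms and factoring out $a_k = 2^{2h_d}/(2\pi e\overline{\sigma}_k^2)$ yields the first logarithm in \eqref{eqn:secrecy_rate}; the subtracted interference term inside the first log comes from writing the mutual information as a difference of the two "$1+\sum a_k(\cdot)^2$" / "$1+\sum b_k(\cdot)^2$" expressions so that the numerator contains all $i$ and the denominator only $i\neq k$.

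For the eavesdropper term, I would upper bound $I(d_k;\mathbf{y}_{-k})$ by the mutual information under the most favorable (for the eavesdropper) assumption, i.e. by giving the colluding eavesdroppers genuine access only through the signal $\mathbf{h}_i^T\mathbf{w}_k d_k$ at each receiver $i\neq k$ and treating everything else as known side information; a standard step here is that providing the other users' data symbols $d_{-k}$ as side information can only increase the mutual information, after which the channel to "eavesdropper $i$" is a scalar Gaussian channel with signal power $b_i(\mathbf{h}_i^T\mathbf{w}_k)^2$ (again after the $\gamma\eta$ normalization), giving $I(d_k;\mathbf{y}_{-k}) \leq \frac{1}{2}\log_2\!\big(1 + \sum_{i\neq k} b_i(\mathbf{h}_i^T\mathbf{w}_k)^2\big)$ by the Gaussian capacity formula plus independence across receivers. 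Subtracting this from the legitimate lower bound gives exactly \eqref{eqn:secrecy_rate}.

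The main obstacle I anticipate is making the interference-as-noise and side-information arguments rigorous in a way that genuinely yields a valid \emph{lower} bound on the secrecy rate rather than a heuristic: one must be careful that the lower bound on $I(d_k;y_k)$ and the upper bound on $I(d_k;\mathbf{y}_{-k})$ are compatible with a single achievable wiretap code (this is where the generalization of \cite{arfaoui2018secrecy} to unequal noise powers actually bites, since the noise variances $\overline{\sigma}_k^2$ now differ across users and enter the coefficients $a_k, b_k$ asymmetrically). I would handle this by invoking the achievability result for the MISO wiretap channel with confidential messages from \cite{arfaoui2018secrecy} and re-deriving the bounds with the per-user normalized noise $\overline{\sigma}_k^2$ carried through symbolically; the entropy-power inequality step and the "conditioning reduces entropy / side information increases mutual information" steps are otherwise routine. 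A secondary bookkeeping point is verifying the differential-entropy term: $h_d$ is defined as the differential entropy of the normalized symbol $d_k$ supported on $[-1,1]$, and one must confirm the $2^{2h_d}$ factor is dimensionally consistent after the variance-$\sigma_d^2$ scaling is pushed into $b_k$ rather than into $h_d$.
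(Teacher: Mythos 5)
Your proposal is correct and follows essentially the same route as the paper's own proof: the same secrecy-rate decomposition in which the colluding eavesdroppers are handed $\hat{\mathbf{d}}_k$ as side information (the bound of \cite{arfaoui2018secrecy}), an entropy-power-inequality lower bound on $h(\overline{y}_k)$, a Gaussian maximum-entropy upper bound on $h(\overline{y}_k|d_k)$, and a Gaussian covariance (SIMO/matrix-determinant-lemma) bound on the eavesdroppers' conditional mutual information with the per-user normalized noise variances $\overline{\sigma}_i^2$ carried through, exactly as in the appendix. The only point to tighten is the EPI step: it must be applied jointly to all $K$ independent data terms plus the noise, i.e., $2^{2h(\overline{y}_k)} \geq \sum_{i=1}^{K} 2^{2h_d}\left(\mathbf{h}_k^T\mathbf{w}_i\right)^2 + 2\pi e\,\overline{\sigma}_k^2$, so that the numerator collects all $i$ as you ultimately state; your intermediate single-term formula (desired signal plus lumped noise) would yield a strictly weaker bound that does not match \eqref{eqn:secrecy_rate}.
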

\begin{proof}
The proof is given in the Appendix.
\end{proof}
The SEE of the considered system is therefore calculated as
\begin{align}
\label{eqn:energy_efficiency}
    \Phi(\mathbf{W})=\frac{\sum_{k=1}^K R_{s,k}(\mathbf{W})}{P_{\text{DC}} + \xi \text{Tr}\left(\mathbf{W}\mathbf{W}^T\right)}.
\end{align}

\textcolor{blue}{For precoding designs to maximize the SEE, it is assumed that the CSI and the noise power are estimated at the receiver (e.g., using pilot sequence) then are fed back to the LED transmitter through an uplink using Wi-Fi or infrared.}

\section{SEE Maximization}
\subsection{Optimal Precoding}
In this section, we focus on designing linear precoding schemes that maximize the  SEE defined in \eqref{eqn:energy_efficiency}  given constraints on LEDs' input signals and targeted users' secrecy rates. Mathematically, our considered problem can be formulated as follows

\begin{subequations}
\label{OptProb1}
    \begin{alignat}{2}
        &\underset{\mathbf{W}}{\text{maximize}} & \hspace{2mm} & \Phi(\mathbf{W}) \label{eqn:OF_1}\\
        &\text{subject to }  &  & \nonumber \\
        & & & R_{s,k}(\mathbf{W})  \geq \lambda_k,~~ k = 1, 2, \cdots, K, \label{eqn:lambda_k_1}\\
        & & &\left\lVert\left[\mathbf{W}\right]_{n, :}\right\rVert_1  \leq \min\left(I_n^{\text{DC}},I_{\text{max}}-I_n^{\text{DC}}\right), ~~ n = 1, 2, \cdots, N_T, \label{eqn:constraint38_1}
    \end{alignat}
\end{subequations}
where $\lambda_k$ is the minimum threshold for the  secrecy rate of the $k$-th user. Note that we use \eqref{eqn:constraint38_1} to represent \eqref{eqn:linear_constraints} for the sake of notational consistency.
Due to \eqref{eqn:energy_efficiency}, the above problem is a non-concave fractional programming problem, which may be difficult to solve the optimal solution. 
A traditional approach to tackle fractional programming problems is to employ the Dinkelbach algorithm \cite{dinkelbach1967nonlinear}, which is proved to converge in super-linear time. The principle of the algorithm is parameterizing the objective function to get rid of its fractional form. 
In particular to our problem, we first write $\Phi\left(\mathbf{W}\right) = \frac{N(\mathbf{W})}{D(\mathbf{W})}$,
where $N(\mathbf{W})=\sum_{k=1}^K R_{s,k}(\mathbf{W})$  and $D(\mathbf{W}) = P_{\text{DC}} + \xi \text{Tr}\left(\mathbf{W}\mathbf{W}^T\right)$. Then for some $\mu \geq 0$, which represents the value of $\Phi\left(\mathbf{W}\right)$, we consider the following parameterized problem 
\begin{subequations}
\label{OptProb2}
    \begin{alignat}{2}
        &\underset{\mathbf{W}}{\text{maximize}} & \hspace{2mm} & N\left(\mathbf{W}\right) - \mu D\left(\mathbf{W}\right) \label{eqn:OF_2}\\
        & \text{subject to} & & \nonumber \\
        & & & R_{s,k}(\mathbf{W})  \geq \lambda_k,~~ k = 1, 2, \cdots, K, \label{eqn:lambda_k_2}\\
        & & & \left\lVert\left[\mathbf{W}\right]_{n, :}\right\rVert_1  \leq \min\left(I_n^{\text{DC}},I_{\text{max}}-I_n^{\text{DC}}\right), ~~ n = 1, 2, \cdots, N_T. \label{eqn:constraint38_2}
    \end{alignat}
\end{subequations}
It was shown in \cite{dinkelbach1967nonlinear} that for some $\mu^* \geq 0$, $\mu^* = \frac{N(\mathbf{W}^*)}{D(\mathbf{W}^*)} = \underset{\mathbf{W}}{\text{maximize}}~\frac{N(\mathbf{W})}{D(\mathbf{W})}$ if and only if $\underset{\mathbf{W}}{\text{maximize}}~N(\mathbf{W}^*) - \mu^*D(\mathbf{W}^*) = 0$. Based on this, we present a Dinkelbach-type algorithm, which is described in \textbf{Algorithm \ref{alg.1}}, for solving \eqref{OptProb2}. 

It should be noted that the problem in \eqref{OptProb2} is non-convex due to the non-concave objective function and the non-convex constraint \eqref{eqn:lambda_k_2}. In the following, we propose two approaches to solve the problem. The first approach relies on several variable transformations and first-order Taylor approximations to approximately convexify \eqref{OptProb2}. The CCCP is then utilized to successively solve the approximated problem. In the second approach, we make use of the SDR method to reduce the number of slack variables and approximations needed to approximate the original problem. The approximated problem, in this case, is formulated as a DC programming problem, which can also be solved efficiently using the CCCP. 

\begin{algorithm2e*}[ht]
\SetAlgoLined 
Choose the maximum number of iterations $L_{\text{max}, 1}$ and the error tolerance $\epsilon_{1}$.\\
Initialize $\mu>0$, $l=0$ .\\
\While{$\text{convergence}==\bold{False} \ \text{and} \ l \leq {L}_{\text{max}, 1}$}{
    For a given $\mu$, solve (\ref{OptProb2}) to get $\mathbf{W}^{(l)}$.\\
    \eIf{ $N\left(\mathbf{W}^{(l)}\right)-\mu D\left(\mathbf{W}^{(l)}\right) \leq \epsilon_{1}$ }{
        $\text{convergence}==\mathbf{True}$;\\
        ${\mathbf{W}^*}={\mathbf{W}^{(l)}}$; \\
        $\mu^*=\frac{N\left(\mathbf{W}^{(l)}\right)}{D\left(\mathbf{W}^{(l)}\right)}$;
    }
    {
        $\text{convergence}==\mathbf{False}$ \\
        $l=l+1$; \\
        Update $\mu = \frac{N\left(\mathbf{W}^{(l)}\right)}{D\left(\mathbf{W}^{(l)}\right)}$;
    }
}
Return optimal ${\mathbf{W}^*}$ and ${\mu^*}$.
\caption{Dinkelbach-type algorithm for solving \eqref{OptProb2}}
\label{alg.1}
\end{algorithm2e*}
\subsubsection{CCCP}
\label{sec:general_Taylor}
We first notice that one can transform \eqref{eqn:OF_2} to be concave by introducing the following slack variables  
\begin{subequations}
    \begin{align}
        r_{1,k} \ \overset{\Delta}{=} \ & \frac{1}{2}\log_2\left(1+\sum_{i=1}^K a_k\left(\mathbf{h}_k^T \mathbf{w}_i\right)^2\right) \label{eqn:r1k} ,\\
        p_{1,k} \ \overset{\Delta}{=} \ & \sum_{i=1}^K a_k\left(\mathbf{h}_k^T \mathbf{w}_i\right)^2 \label{eqn:p1k} ,\\
        r_{2,k} \ \overset{\Delta}{=} \ & \frac{1}{2}\log_2\left(1+\sum_{i=1,i\neq k}^K b_k\left(\mathbf{h}_k^T \mathbf{w}_i\right)^2\right) \label{eqn:r2k} ,\\
        p_{2,k} \ \overset{\Delta}{=} \ & \sum_{i=1,i\neq k}^K b_k\left(\mathbf{h}_k^T \mathbf{w}_i\right)^2 \label{eqn:p2k} ,\\
        r_{3,k} \ \overset{\Delta}{=} \ & \frac{1}{2}\log_2\left(1+\sum_{i=1,i\neq k}^K b_i\left(\mathbf{h}_i^T \mathbf{w}_k\right)^2\right) \label{eqn:r3k} ,\\
        p_{3,k} \ \overset{\Delta}{=} \ & \sum_{i=1,i\neq k}^K b_i\left(\mathbf{h}_i^T \mathbf{w}_k\right)^2 \label{eqn:p3k} .
    \end{align}
\end{subequations}
Accordingly, \eqref{OptProb2} can be equivalently reformulated as 
\begin{subequations}
\label{OptProb3}
    \begin{alignat}{2}
        &\underset{\substack{\mathbf{W}, \\ r_{1, k},r_{2, k}, r_{3, k}, \\  p_{1, k}, p_{2, k}, p_{3, k}}}{\text{maximize}} & & \sum_{k=1}^K\left(r_{1,k} - r_{2,k} - r_{3,k}\right)-\mu\left(P_{\text{DC}}+\epsilon \text{Tr}\left(\mathbf{W}\mathbf{W}^T\right)\right) \label{eqn:OF_3}\\
        &\text{subject to} & & \nonumber \\
        & & & r_{1,k} \leq \frac{1}{2}\log_2\left(1+p_{1,k}\right),            \label{eqn:constraint31} \\
        & & &p_{1,k} \leq \sum_{i=1}^K a_k\left(\mathbf{h}_k^T \mathbf{w}_i\right)^2,                 \label{eqn:constraint32} \\
        & & &r_{2,k} \geq \frac{1}{2}\log_2\left(1+p_{2,k}\right),            \label{eqn:constraint33} \\
        & & &p_{2,k} \geq \sum_{i=1,i\neq k}^K b_k\left(\mathbf{h}_k^T \mathbf{w}_i\right)^2,         \label{eqn:constraint34}\\
        & & & r_{3,k} \geq \frac{1}{2}\log_2\left(1+p_{3,k}\right),            \label{eqn:constraint35} \\
        & & & p_{3,k} \geq \sum_{i=1,i\neq k}^K b_i\left(\mathbf{h}_i^T \mathbf{w}_k\right)^2,         \label{eqn:constraint36}\\
        & & & r_{1,k} - r_{2,k} - r_{3,k} \geq \lambda_k,                 \label{eqn:constraint37}\\
        & & & \left\lVert\left[\mathbf{W}\right]_{n, :}\right\rVert_1  \leq \min\left(I_n^{\text{DC}},I_{\text{max}}-I_n^{\text{DC}}\right). \label{eqn:constraint38}
    \end{alignat}
\end{subequations}
In the construction of the above problem, equalities in \eqref{eqn:r1k}-\eqref{eqn:p3k} are replaced with their corresponding inequalities constraints \eqref{eqn:constraint31}-\eqref{eqn:constraint36}, respectively. These replacements, however, do not change the optimality of the optimization problem. Indeed, as \eqref{eqn:OF_3} monotonically increases with $r_{1, k}$, at the optimal solution, $r_{1, k}$ must achieve its maximum value. This implies that both \eqref{eqn:constraint31} and \eqref{eqn:constraint32} then hold with equality. Similarly, the validity of \eqref{eqn:constraint33}-\eqref{eqn:constraint36} can be verified. 

It is seen that constraints \eqref{eqn:constraint31}, \eqref{eqn:constraint34}, \eqref{eqn:constraint36}, and \eqref{eqn:constraint37} are convex but not \eqref{eqn:constraint32}, \eqref{eqn:constraint33}, and (\ref{eqn:constraint35}). To cope with this situation, the first-order Taylor expansions are employed to approximately linearize these non-convex constraints. The CCCP is then utilized to successively solve the approximated problem. Specifically, the proposed CCCP involves an iterative process, where at the $m$-th iteration, the following Taylor approximations are used
\begin{align}
    \sum_{i=1}^K a_k\left(\left(\mathbf{h}_k^T {\mathbf{w}}^{(m-1)}_i\right)^2 + 2\left[{\mathbf{w}}^{(m-1)}_i\right]^T \mathbf{h}_k \mathbf{h}_k^T \left(\mathbf{w}_i -{\mathbf{w}}^{(m-1)}_i\right)\right) 
    \leq \sum_{i=1}^K a_k\left(\mathbf{h}_k^T \mathbf{w}_i\right)^2,
    \label{Taylor1}
\end{align}
\begin{align}
    \frac{1}{2}\log_2\left(1+{p}^{(m-1)}_{2,k}\right) + \frac{\left(p_{2,k}-{p}^{(m-1)}_{2,k}\right)}{2\text{ln}(2)\left(1+{p}^{(m-1)}_{2,k}\right)} \geq \frac{1}{2}\log_2\left(1 + p_{2, k}\right),
    \label{Taylor2}
\end{align}
\begin{align}
    \frac{1}{2}\log_2\left(1+{p}^{(m-1)}_{3,k}\right) + \frac{\left(p_{3,k}-{p}^{(m-1)}_{3,k}\right)}{2\text{ln}(2)\left(1+{p}^{(m-1)}_{3,k}\right)} \geq \frac{1}{2}\log_2\left(1 + p_{3, k}\right).
    \label{Taylor3}
\end{align}
Here, $\mathbf{w}^{(m-1)}_i$, $p_{2, k}^{(m-1)}$, and $p_{3, k}^{(m-1)}$ are the values of $\mathbf{w}$, $p_{2, k}$, and $p_{3, k}$ obtained at the $(m-1)$-th iteration, respectively. Now, replacing the non-linear terms in \eqref{eqn:constraint32}, \eqref{eqn:constraint33}, and (\ref{eqn:constraint35}) by their above approximate linear representations results in  
\begin{subequations}
\label{OptProb4}
    \begin{alignat}{2}
        &\underset{\substack{\mathbf{W}, \\ r_{1, k}, r_{2, k}, r_{3, k}, \\  p_{1, k}, p_{2, k}, p_{3, k}}}{\text{maximize}}  \sum_{k=1}^K\left(r_{1,k} - r_{2,k} - r_{3,k}\right)-\mu\left(P_{\text{DC}}+\epsilon \text{Tr}\left(\mathbf{W}\mathbf{W}^T\right)\right) \label{eqn:OF_4}\\
        &\text{subject to} &  & \nonumber \\ 
        & p_{1, k} \leq \sum_{i=1}^K a_k\left(\left(\mathbf{h}_k^T {\mathbf{w}}^{(m-1)}_i\right)^2 + 2\left[{\mathbf{w}}^{(m-1)}_i\right]^T \mathbf{h}_k \mathbf{h}_k^T \left(\mathbf{w}_i -{\mathbf{w}}^{(m-1)}_i\right)\right), \label{eqn:constraint41} \\
        & r_{2, k} \geq \frac{1}{2}\log_2\left(1+{p}^{(m-1)}_{2,k}\right) + \frac{\left(p_{2,k}-{p}^{(m-1)}_{2,k}\right)}{2\text{ln}(2)\left(1+{p}^{(m-1)}_{2,k}\right)}, \label{eqn:constraint42} \\
        & r_{3, k} \geq \frac{1}{2}\log_2\left(1+{p}^{(m-1)}_{3,k}\right) + \frac{\left(p_{3,k}-{p}^{(m-1)}_{3,k}\right)}{2\text{ln}(2)\left(1+{p}^{(m-1)}_{3,k}\right)},
        \label{eqn:constraint43} \\
        & \eqref{eqn:constraint31}, \eqref{eqn:constraint34}, \eqref{eqn:constraint36},  \eqref{eqn:constraint37}, \nonumber
    \end{alignat}
\end{subequations}
which is a convex optimization and thus can be solved efficiently by using a standard convex solver, such as CVX toolbox \cite{cvx}. Finally, (\ref{OptProb3}) can be solved by the proposed {\textbf{Algorithm~\ref{alg.2}}} described  as follows

\begin{algorithm2e*}[ht]
\SetAlgoLined 
Choose the maximum number of iteration ${L}_{\text{max}, 2}$ and the error tolerance $\epsilon_{2}>0$. \\
Choose feasible initial points $\mathbf{W}^{(0)}$, $p^{(0)}_{2,k}$, $p^{(0)}_{3,k}$ to problem (\ref{OptProb3}). \\
Set $m=0$ \\
\While{$\text{convergence}==\mathbf{False}$ and $m \leq {L}_{\text{max}, 2}$}{
    Use $\mathbf{W}^{(m-1)}$, $p^{(m-1)}_{2,k}$, $p^{(m-1)}_{3,k}$ obtained from the previous iteration to formulate the problem (\ref{OptProb3}). \\
    Solve (\ref{OptProb3}) to get $\mathbf{W}^{(m)}$, $p^{(m)}_{2,k}$, $p^{(m)}_{3,k}$. \\
    \eIf{$\frac{\norm{\mathbf{W}^{(m)} - \mathbf{W}^{(m-1)}}}{\norm{\mathbf{W}^{(m)}}} \leq \epsilon_{2}$ \text{and} $\frac{\left|p^{(m)}_{2, k} - p^{(m-1)}_{2, k}\right|}{p^{(m)}_{2, k}} \leq \epsilon_{2}$ \text{and} $\frac{\left|p^{(m)}_{3, k} - p^{(m-1)}_{3, k}\right|}{p^{(m)}_{3, k}} \leq \epsilon_{2}$}{
        $\text{convergence} = \mathbf{True}$ \\
        $\mathbf{W}^* = \mathbf{W}^{(m)}$ \\
        $p_{2, k}^* = p^{(m)}_{2, k}$ \\
        $p_{3, k}^* = p^{(m)}_{3, k}$ \\
    }
    {
        $\text{convergence} = \mathbf{False}$ \\
    }
    $m = m + 1$. 
}
Return the optimal value $\mathbf{W}^*$, $p_{2, k}^*$, and $p_{3, k}^*$. 
\caption{CCCP-type algorithm to solve \eqref{OptProb2}.}
\label{alg.2}
\end{algorithm2e*}
\subsubsection{CCCP combined with SDR}
\label{sub:Dinkelbach_SDR}
In this section, the SDR method is used to approximate (\ref{OptProb2}) to a convex problem.
In particular, we first define $\mathbf{P}_k=\mathbf{h}_k \mathbf{h}_k^T$ and $\mathbf{Q}_k=\mathbf{w}_k \mathbf{w}_k^T$. This variable transformation requires that $\mathbf{Q}_k \succeq \mathbf{0}$ and $\text{rank}\left(\mathbf{Q}_k\right) = 1$. 
Then, based on the fact that $\left(\mathbf{h}_i^T \mathbf{w}_j\right)^2=\text{Tr}\left(\mathbf{h}_i \mathbf{h}_i^T \mathbf{w}_j \mathbf{w}_j^T\right)=\text{Tr}(\mathbf{P}_i \mathbf{Q}_j)$, the secrecy rate of the $k$-th user in (\ref{eqn:secrecy_rate}) can be rewritten as
\begin{equation}
\label{eqn:EE_DC_equattion}
    R_{s,k}\left(\mathbf{Q}_1, \cdots,\mathbf{Q}_K\right) = f_k\left(\mathbf{Q}_1, \cdots,\mathbf{Q}_K\right) - g_k\left(\mathbf{Q}_1, \cdots,\mathbf{Q}_K\right) ,
\end{equation}
where
\begin{equation}
    f_k\left(\mathbf{Q}_1, \cdots,\mathbf{Q}_K\right) =  \frac{1}{2}\log_2\left(1+\sum_{i=1}^K a_k \text{Tr}\left(\mathbf{P}_k \mathbf{Q}_i\right)\right)  \label{eqn:f_k},
\end{equation}
\begin{equation}
    g_k\left(\mathbf{Q}_1, \cdots,\mathbf{Q}_K\right) =   \frac{1}{2}\log_2\left(1+\sum_{i=1,i\neq k}^K b_k \text{Tr}\left(\mathbf{P}_k \mathbf{Q}_i\right)\right) 
    + \frac{1}{2}\log_2\left(1+\sum_{i=1,i\neq k}^K b_i \text{Tr}\left(\mathbf{P}_i \mathbf{Q}_k\right)\right) . \label{eqn:g_k}
\end{equation}

Through the definition of $\mathbf{Q}_k$'s, we can write \eqref{eqn:OF_2} as $\Upsilon\left(\mathbf{Q}_1, \cdots, \mathbf{Q}_K\right) = \sum_{k = 1}^K\left( f_k\left(\mathbf{Q}_1, \cdots, \mathbf{Q}_K\right) \right.$ $- \left. g_k\left(\mathbf{Q}_1, \cdots, \mathbf{Q}_K\right)\right) - \mu\left(P_{\text{DC}} + \epsilon\sum_{k = 1}^K\text{Tr}\left(\mathbf{Q}_k\right)\right)$. Since $\sum_{k = 1}^K\text{Tr}\left(\mathbf{Q}_k\right)$ is affine, $\Upsilon\left(\mathbf{Q}_1, \cdots, \mathbf{Q}_K\right)$ can be seen as the difference of two concave functions. Moreover, the left-hand side of \eqref{eqn:lambda_k_2} is rewritten as $f_k\left(\mathbf{Q}_1, \cdots, \mathbf{Q}_K\right) - g_k\left(\mathbf{Q}_1, \cdots, \mathbf{Q}_K\right)$, which is also the difference of two concave functions. We, therefore, recast \eqref{OptProb1} in the form of DC programming, which can be solved efficiently using the CCCP \cite{Horst1999}. Specifically, at the $m$-th iteration of the procedure, the concave term $g_k\left(\mathbf{Q}_1, \cdots, \mathbf{Q}_K\right)$ is approximately linearized as 
\begin{align}
    {g}_k\left(\mathbf{Q}_1, \cdots,\mathbf{Q}_K\right) \approx  g_k\left({\mathbf{Q}}_1^{(m-1)}, \cdots, {\mathbf{Q}}_K^{(m-1)}\right)  + \sum_{i=1}^K \left\langle \nabla_{g_k}\left({\mathbf{Q}}_i^{(m-1)}\right),\mathbf{Q}_i-{\mathbf{Q}}_i^{(m-1)} \right\rangle_{\text{F}},
    \label{eqn:g_k_approximation}
\end{align}
where $\langle \cdot, \cdot \rangle_{\text{F}}$ denotes the Frobenius inner dot product, $\mathbf{Q}^{(m-1)}_i$ is the value of $\mathbf{Q}_i$ at the $(m-1)$-th iteration, $\nabla_{g_k}\left({\mathbf{Q}}_i^{(m-1)}\right)$ is the gradient of $g_k\left(\mathbf{Q}_1, \cdots, \mathbf{Q}_K\right)$ at point ${\mathbf{Q}}_i^{(m-1)}$ and calculated as
\begin{align}
    \nabla_{g_k}\left(\mathbf{Q}_i^{(m-1)}\right) =
    \begin{cases}
      z^1_k b_k \ \mathbf{P}_k^T & i \neq k, \\
      z^2_k \sum_{j=1,j \neq k}^K b_j \mathbf{P}_j^T & i=k, \\
    \end{cases}  
\label{eqn:grad_g_k}
\end{align}
where $z_k^1$ and $z_k^2$ are given by
\begin{subequations}
    \begin{align}
        z_k^1 = \frac{1}{2\ln(2)\left(1+\sum_{j=1,j\neq k}^K b_k \text{Tr}\left(\mathbf{P}_k {\mathbf{Q}}_j^{(m-1)}\right)\right)} , \label{eqn:z1k}\\
        z_k^2 = \frac{1}{2\ln(2)\left(1+\sum_{j=1,j\neq k}^K b_j \text{Tr}\left(\mathbf{P}_j {\mathbf{Q}}_k^{(m-1)}\right)\right)}. \label{eqn:z2k}
    \end{align}
\end{subequations}
\textcolor{blue}{Now, we need to represent the constraint $\left\lVert\left[\mathbf{W}\right]_{n, :}\right\rVert_1 = \sum_{k=1}^K \left|w_{n,k}\right|  \leq \min\left(I_n^{\text{DC}},I_{\text{max}}-I_n^{\text{DC}}\right)$ as a function of $\mathbf{Q}_k$'s. Unfortunately, such an equivalent reformulation is not possible. To overcome this issue, previous studies \cite{pham2020energy} \cite{feng2016linear} replaced the constraint with a stricter one using 
\begin{align}
    \frac{\left(\sum_{k= 1}^K\left|w_{n, k}\right|\right)^2}{K} \leq \sum_{k=1}^K w^2_{n, k} = \sum_{k = 1}^K [\mathbf{Q}_k]_{n, n} \leq \frac{1}{K} \left(\min\left(I_n^{\text{DC}},I_{\text{max}}-I_n^{\text{DC}}\right)\right)^2.
\end{align}
Using the stricter constraint, however, results in a smaller feasible region. Hence, the optimal objective value of the new problem may be smaller than that of the original one.
On the other hand, we can also replace the original constraint with a looser one as \cite{ma2017achievable}\cite{du2019secure}
\begin{align}
    \sum_{k=1}^K w^2_{n, k} \leq \left(\sum_{k= 1}^K\left|w_{n, k}\right|\right)^2 \leq \left(\min\left(I_n^{\text{DC}},I_{\text{max}}-I_n^{\text{DC}}\right)\right)^2.
\end{align}
This approach, nonetheless, leads to a larger feasible region. Thus, the optimal point of the new problem may lie outside the feasible region of the original problem.} 

\textcolor{blue}{To address these issues, we introduce a new approach by using a stricter constraint, whose feasible region progressively approaches the optimal point of the original problem. Specifically, we make use of the following inequality 
\begin{align}
\label{eqn:quadratic_maximum_constraints}
    \sum_{k=1}^K \frac{w_{n,k}^2}{\delta_{n,k}}  \leq \frac{1}{\sum_{k=1}^K \delta_{n,k}} \left(\min\left(I_n^{\text{DC}},I_{\text{max}}-I_n^{\text{DC}}\right)\right)^2.
\end{align}
As proof, using Cauchy-Schwarz inequality, one has
\begin{align}
\label{eqn:Cauchy-Schwarz-inequality}
            \sum_{k=1}^K |w_{n,k}|  \leq \sqrt{\left(\sum_{k=1}^K \frac{w_{n,k}^2}{\delta_{n,k}}\right)\left(\sum_{k=1}^K \delta_{n,k}\right)} 
             & \leq \sqrt{\frac{\left(\min\left(I_n^{\text{DC}},I_{\text{max}}-I_n^{\text{DC}}\right)\right)^2}{\sum_{k=1}^K \delta_{n,k}}\left(\sum_{k=1}^K \delta_{n,k}\right)} \nonumber \\
             & = \min\left(I_n^{\text{DC}},I_{\text{max}}-I_n^{\text{DC}}\right).
\end{align}
Obviously, setting $\delta_{n,k} = 1$ results in the constraint used in \cite{pham2020energy} \cite{feng2016linear}. To make the feasible region of \eqref{eqn:quadratic_maximum_constraints} approach the optimal point, $\delta_{n,k}$ is updated by the optimal precoder value from the previous iteration, i.e.,  $\delta_{n,k} = |w_{n,k}^{(m-1)}|$. The intuition behind this is that after each iteration, the feasible region of the updated constraint (a $K$-dimensional ellipsoid) extends (or contracts) along the direction of $w_{n,k}$ corresponding to the obtained $w_{n,k}^{(m-1)}$ from the previous iteration. Hence, when the iteration terminates at the optimal solution, the feasible region of the updated constraint must contain the optimal point.} Also, the constraint in (\ref{eqn:quadratic_maximum_constraints}) needs to be approximately represented in terms of $\mathbf{Q}_k$'s as well. It is first seen that $|w_{n,k}|=\sqrt{w_{n,k}^2}= \sqrt{\text{Tr}(\mathbf{E}_{n} \mathbf{Q}_k)}$, where $\mathbf{E}_{n} \in \mathbb{R}^{N_T\times N_T}$ denotes the all-zero matrix except the $(n,n)$-th entry is 1. Then, the constraint can be approximated as follows
\begin{equation}
\sum_{k=1}^K \frac{\text{Tr}\left(\mathbf{E}_{n}\mathbf{Q}_k\right)}{\sqrt{\text{Tr}\left(\mathbf{E}_{n} {\mathbf{Q}}^{(m-1)}_k\right)}}  \leq \frac{\left(\min\left(I_n^{\text{DC}},I_{\text{max}}-I_n^{\text{DC}}\right)\right)^2}{\sum_{k=1}^K \sqrt{\text{Tr}\left(\mathbf{E}_{n} {\mathbf{Q}}^{(m-1)}_k\right)}} . \label{eqn:quadratic_maximum_constraints_3}
\end{equation}
It is observed that the only non-convex terms now is $\text{rank}(\mathbf{Q}_k) = 1$. The SDR technique works by omitting these rank-one constraints, resulting in the following convex problem. 

\begin{figure*}[ht]
\begin{subequations}
\label{OptProb5}
    \begin{alignat}{2}
        &\underset{\substack{\mathbf{Q}_1, \cdots, \mathbf{Q}_K}}{\text{maximize}} \hspace{2mm} \sum_{k=1}^K\left(f_k\left(\mathbf{Q}_1, \cdots, \mathbf{Q}_K\right) - g_k\left(\mathbf{Q}_1^{(m-1)}, \cdots, \mathbf{Q}_K^{(m-1)}\right) 
         - \sum_{i = 1}^K\left\langle\nabla_{g_k}\left(\mathbf{Q}_i^{(m-1)}\right),\mathbf{Q}_i - \mathbf{Q}_i^{(m-1)}\right\rangle_{\text{F}} \right) \nonumber \\ & ~~~~~~~~~~~~- \mu\left(P_{\text{DC}}+\epsilon \sum_{k = 1}^K\text{Tr}\left(\mathbf{Q}_k\right)\right) \label{eqn:OF_5}\\
        & \text{subject to} \nonumber \\
        & f_k\left(\mathbf{Q}_1, \cdots, \mathbf{Q}_K\right) - g_k\left(\mathbf{Q}_1^{(m-1)}, \cdots, \mathbf{Q}_K^{(m-1)}\right) - \sum_{i = 1}^K\left\langle\nabla_{g_k}\left(\mathbf{Q}_i^{(m-1)}\right),\mathbf{Q}_i - \mathbf{Q}_i^{(m-1)}\right\rangle_{\text{F}} \geq \lambda_k, \label{eqn:constraint51} \\
        & \sum_{k=1}^K \frac{\text{Tr}\left(\mathbf{E}_{n}\mathbf{Q}_k\right)}{\sqrt{\text{Tr}\left(\mathbf{E}_{n} {\mathbf{Q}}^{(m-1)}_k\right)}}  \leq \frac{\left(\min\left(I_n^{\text{DC}},I_{\text{max}}-I_n^{\text{DC}}\right)\right)^2}{\sum_{k=1}^K \sqrt{\text{Tr}\left(\mathbf{E}_{n} {\mathbf{Q}}^{(m-1)}_k\right)}}, \label{eqn:constraint52} \\
        & \mathbf{Q}_k \succeq \mathbf{0}. \label{eqn:constraint53}
    \end{alignat}
\end{subequations}
\end{figure*}
\noindent Accordingly, a CCCP-type algorithm for solving \eqref{OptProb5} is described in \textbf{Algorithm \ref{alg.3}}.

\begin{algorithm2e*}[http]
\label{Alg:SRDmethod}
\SetAlgoLined 
Choose the maximum number of iteration ${L}_{\text{max}, 3}$ and the error tolerance $\epsilon_3>0$. \\
Choose an initial point $\mathbf{W}^{(0)}$ being feasible to the problem (\ref{OptProb2}) and calculate corresponding matrices $\mathbf{Q}^{(0)}_1, \cdots, \mathbf{Q}^{(0)}_K$.\\
Set $m=0$. \\
\While{$\text{Convergence}==\mathbf{False}$ and $m \leq {L}_{\text{max}, 3}$}{
    Use $\mathbf{Q}_1^{(m-1)}$, $\cdots$,$\mathbf{Q}_K^{(m-1)}$ obtained from the previous iteration to formulate the problem \eqref{OptProb5}.\\
    Get $\mathbf{Q}_1^{(m)*}$, $\cdots$,$\mathbf{Q}_K^{(m)*}$ as the optimal solutions to \eqref{OptProb5}.\\
    Obtain ${\mathbf{Q}}_1^{(m)}$, $\cdots$,${\mathbf{Q}}_K^{(m)}$ as rank-one approximations to $\mathbf{Q}_1^{(m)*}$, $\cdots$,$\mathbf{Q}_K^{(m)*}$, respectively.\\
    Retrieve $\mathbf{W}^{(m)}$ from $\mathbf{Q}_1^{(m)}$, $\cdots$,$\mathbf{Q}_K^{(m)}$ .\\
    \eIf{$\frac{\norm{\mathbf{W}^{(m)} - \mathbf{W}^{(m-1)}}}{\norm{\mathbf{W}^{(m)}}} \leq \epsilon_3$}{
        $\text{Convergence} = \mathbf{True}$.\\
        $\mathbf{W}^* = \mathbf{W}^{(m)}$.\\
    }
    {
        $\text{Convergence} = \mathbf{False}$.\\
    }
    $m = m + 1$.
}Return the optimal solution $\mathbf{W}^*$.
\caption{CCCP-type algorithm with SDR for solving (\ref{OptProb5})}
\label{alg.3}
\end{algorithm2e*} 
It should be noted that the optimal solutions to \eqref{OptProb5} in each iteration may not satisfy the rank-one constraints. Thus, in the proposed {\textbf{Algorithm \ref{alg.3}}}, the optimal solution $\mathbf{Q}_k^{(m)*}$ at the $m$-th iteration is approximated by a rank-one matrix ${\mathbf{Q}}_k^{(m)}$, which is given by
\begin{align}
    {\mathbf{Q}}_k^{(m)} = \Lambda^{(m)}_{\text{max}, k} \mathbf{q}^{(m)}_{\text{max}, k}\left[\mathbf{q}^{(m)}_{\text{max}, k}\right]^T,
    \label{rank-oneApproximation}
\end{align}
where $\Lambda_{\text{max}, k}$ is the maximum eigenvalue of $\mathbf{Q}^{(m)*}_k$ and $\mathbf{q}^{(m)}_{\text{max}, k}$ is the eigenvector which is associated with $\Lambda_{\text{max}, k}$. 
\subsection{ZF Precoding}
To further simplify the precoding design, we investigate the use of ZF technique in this section. Under the ZF constraint, the precoder of a user is constructed in such as way that it is orthogonal to other users' channels, i.e., $\mathbf{h}_i^T\mathbf{w}_k = 0~~\forall i \neq k$, thus eliminating all inter-user interference. Due to this orthogonality, the degrees of freedom in designing $\mathbf{W}$ can considerably decrease compared to the general case, resulting in a lower computational complexity at the expense of reduced performance. 

If we denote $\mathbf{H} = \begin{bmatrix}\mathbf{h}_1 & \mathbf{h}_2 & \cdots & \mathbf{h}_K\end{bmatrix}^T$, the ZF constraint can then be represented as 
\begin{align}
    \mathbf{H}\mathbf{W} = \text{diag}\left\{\sqrt{\pmb{\rho}}\right\},
    \label{ZFConstraint}
\end{align}
where $\pmb{\rho} = \begin{bmatrix}\rho_1 & \rho_2 & \cdots & \rho_K\end{bmatrix}^T$ is the vector of users' channel gains. 
As a result, the secrecy rate of the $k$-th user is simplified to
\begin{equation}
    R_{s,k}(\mathbf{W}) = \frac{1}{2}\log_2\left(1+ a_k\rho_k\right).
\end{equation}
Then the problem (\ref{OptProb2}) can be rewritten as
\begin{subequations}
\label{OptProb6}
    \begin{alignat}{2}
        &\underset{\mathbf{W},~\pmb{\rho}}{\text{maximize}} & \hspace{2mm} & \sum_{k=1}^K \frac{1}{2}\log_2\left(1+ a_k\rho_k\right) - \mu \left(P_{\text{DC}} + \xi \text{Tr}\left(\mathbf{W}\mathbf{W}^T\right)\right) \label{eqn:OF6} \\
        & \text{subject to} &  & \nonumber \\
        & & & \frac{1}{2}\log_2\left(1 + a_k\rho_k\right)  \geq \lambda_k,  \label{eqn:constraint61} \\
        & & & \left\lVert\left[\mathbf{W}\right]_{n, :}\right\rVert_1  \leq \min\left(I_n^{\text{DC}},I_{\text{max}}-I_n^{\text{DC}}\right), \label{eqn:constraint62} \\
        & & & \mathbf{H}\mathbf{W} = \text{diag}\left\{\sqrt{\pmb{\rho}}\right\}. \label{eqn:constraint63}
    \end{alignat}
\end{subequations}
The above problem is not convex due to the non-convex constraint in $\eqref{eqn:constraint63}$. Hence, we again employ the CCCP to solve the problem by approximately convexing  $\eqref{eqn:constraint63}$ using the first-order Taylor approximation as
\begin{align}
    \mathbf{H}\mathbf{W} \approx \text{diag}\left\{\sqrt{\pmb{\rho}^{(m-1)}} + \frac{1}{2\sqrt{\pmb{\rho}^{(m-1)}}}\left(\pmb{\rho} - \pmb{\rho}^{(m-1)}\right)\right\},
    \label{eqn:constraint63_approximate}
\end{align}
where $\pmb{\rho}^{(m-1)}$ is the vector of users' channel gains at the $(m-1)$-th iteration. 
Finally, (\ref{OptProb6}) can be solved by the following algorithm

\begin{algorithm2e*}[ht]
\SetAlgoLined 
Choose the maximum number of iteration ${L}_{\text{max}, 4}$ and the error tolerance $\epsilon_{4}>0$. \\
Choose an initial point $\pmb{\rho}^{(0)}$ that is feasible to the problem (\ref{OptProb6}). \\
Set $m=0$ \\
\While{$\text{convergence}==\mathbf{False}$ and $m \leq {L}_{\text{max}, 4}$}{
    Use $\pmb{\rho}^{(m-1)}$ as previous optimal points to formulate \eqref{OptProb6} where \eqref{eqn:constraint63} is replaced with \eqref{eqn:constraint63_approximate}.\\
    Get $\mathbf{W}^{(m)}$ and $\pmb{\rho}^{(m)}$ as optimal solutions to \eqref{OptProb6}. \\
    \eIf{$\frac{\norm{\pmb{\rho}^{(m)} - \pmb{\rho}^{(m-1)}}}{\norm{\pmb{\rho}^{(m)}}} \leq \epsilon_4$}{
        $\text{convergence} = \mathbf{True}$ \\
        $\pmb{\rho}^* = \pmb{\rho}^{(m)}$ \\
        $\mathbf{W}^* = \mathbf{W}^{(m)}$ \\
    }
    {
        $\text{convergence} = \mathbf{False}$ \\
    }
    $m = m+1$.
}
Return the optimal values $\mathbf{W}^*$ and $\pmb{\rho}^*$.
\caption{CCCP-type algorithm for solving \eqref{OptProb6}.}
\label{alg.4}
\end{algorithm2e*}

\subsection{Complexity Analysis}
\label{complexityAnalysis}

\textcolor{black}{All the three proposed precoding schemes to solve \eqref{OptProb1} utilize the Dinkelbach algorithm (i.e., \textbf{Algorithm \ref{alg.1}}) as the outer loop and each of $\textbf{Algorithms \ref{alg.2}}$, $\textbf{\ref{alg.3}}$, and $\textbf{\ref{alg.4}}$ as the inner loop. Assume that in each iteration of the inner loop, the Interior Point Algorithm (IPA) is employed to solve its convex sub-problem (i.e., \eqref{OptProb4}, \eqref{OptProb5}, and \eqref{OptProb6}). As a result, the number of Newton steps counted from the starting to the optimal points is used as the complexity measure. We estimate the total number of Newton steps $N_{s}^{\text{total}}$ being involved in each scheme as follows}

\begin{equation}
\label{eqn:newton_steps}
    N_s^{\text{total}}=N_{\text{iter}} \times N_s,
\end{equation}
\textcolor{black}{where $N_{\text{iter}}$ is the number of total iterations required for the algorithm's convergence and $N_s$ is the number of Newton steps needed to solve the involved convex sub-problem. It is noted that $N_{\text{iter}}$ is counted from the start to the convergence of the outer loop.} In this section, we qualitatively characterize $N_s$ while $N_{\text{iter}}$ is numerically investigated in the next section. According to \cite{arfaoui2018artificial} and \cite{nesterov1994interior}, the worst-case $N_s$ required for a non-linear convex problem to find one of its local solutions can be estimated by
\begin{equation}
\label{eqn:newton_steps2}
    N_s \sim \sqrt{\text{problem size}},
\end{equation}
where ``problem size" is the number of scalar variables of the considered optimization problem. Since there are $(N_T + 6)K$, $N^2_TK$, and $(N_T + 1) K$ scalar variables in \eqref{OptProb4}, \eqref{OptProb5}, and \eqref{OptProb6}, respectively, 
we have $N_s^{\textbf{Algorithm 2}} \sim \sqrt{(N_T + 6)K}$, $N_s^{\textbf{Algorithm 3}} \sim \sqrt{N_T^2K}$, and $N_s^{\textbf{Algorithm 4}} \sim \sqrt{(N_T + 1)K}$. It is obvious that the use of ZF constraint in \eqref{OptProb6} reduces  the design complexity, especially when $N_T$ is small. In addition, we have
\begin{align}
    \frac{N_s^{\textbf{Algorithm 2}}}{N_s^{\textbf{Algorithm 3}}} & \sim \sqrt{\frac{N_T+6}{N_T^2}},
\end{align}
which indicates that in the worst-case scenario, the complexity of solving each iteration of \textbf{Algorithm 3} is higher than that of \textbf{Algorithm 2} theoretically. Also, this increased complexity is proportional to the number of LED luminaries $N_T$. 
However, in the next section, we show that under our parameter setting when $N_T$ is set practically small, i.e., $N_T \leq 9$, the software implementation of \textbf{Algorithm 3} achieves lower average running time than that of \textbf{Algorithm \ref{alg.2}} does.   
\section{Simulation Results and Discussions}
\label{sec:simu}
In this section, numerical results regarding feasibilities, convergence behaviors, and performances of the three design approaches are presented. To specify positions of LED luminaries and users, a Cartesian coordinate system whose origin is the center of the floor is used. Additionally, users are placed on the same plane at the height of $0.5$ m as described in Fig. \ref{system_model}. 
If not otherwise specified, we also assume that $N_T = 4$, the average emitted optical power of each LED luminary $\overline{P}_n^s = \eta I_{n}^{\text{DC}} = 30$ dBm, $P_{\text{DC,circuitry}}=8 ~\text{W}$, $\xi=3~\Omega$, $\lambda_k$'s $=0.5$ bits/s/Hz. Other parameters for simulations are summarized in Table~\ref{VLC_parameters} \textcolor{blue}{\cite{Pham2018}}. For an explicit expression of the lower bound in \eqref{eqn:secrecy_rate}, users' data symbols $d_k$'s are assumed to be uniformly distributed over $[-1, 1]$. 


\begin{table}[t!]
\caption{System Parameters} 
\centering 
\begin{tabular}{l l l} 
\midrule\midrule
\multicolumn{2}{c}{\bf{Room and LED configurations}} \\
\midrule\midrule 
Room dimension  (Length $\times$ Width $\times$ Height) &  5 (m) $\times$ 5 (m) $\times$ 3 (m) \\  
\midrule     
LED positions & luminary 1 : $\left(-\sqrt{2}, -\sqrt{2}, 3\right)$, luminary 2 : $\left(\sqrt{2}, -\sqrt{2}, 3\right)$\\ & luminary 3 : $\left(\sqrt{2}, \sqrt{2}, 3\right)$, luminary 4 : $\left(-\sqrt{2}, \sqrt{2}, 3\right)$ \\
\midrule     
LED bandwidth, $B$ & 20 MHz \\ 
\midrule     
LED beam angle, $\phi$ & $120^\circ$  (LED Lambertian order is 1) \\
\midrule 
LED conversion factor, $\eta$ & 2.00 W/A  \\
\midrule \midrule    
\multicolumn{2}{c}{\bf{Receiver photodetectors}} \\
\midrule \midrule    
Active area, $A_r$ & 1 $\text{cm}^2$ \\ 
\midrule     
Responsivity, $\gamma$ & 0.54 A/W\\ 
\midrule     
Field of view (FOV), $\Psi$ & $60^\circ$\\ 
\midrule     
Optical filter gain, $T_s(\psi)$ & 1\\ 
\midrule     
Refractive index of the concentrator, $\kappa$ & 1.5\\ 
\midrule \midrule
\multicolumn{2}{c}{\bf{Other parameters}} \\
\midrule \midrule
Ambient light photocurrent, $\chi_{\text{amp}}$ & 10.93 $\text{A}/(\text{m}^2 \cdot \text{Sr}$) \\
\midrule 
Preamplifier noise current density, $i_{\text{amp}}$ & 5 $\text{pA}/\text{Hz}^{-1/2}$ \\
\midrule \midrule
\end{tabular}
\label{VLC_parameters}
\end{table} 

\begin{figure}[htbp]
\centerline{\includegraphics[width=12cm, height = 9cm]{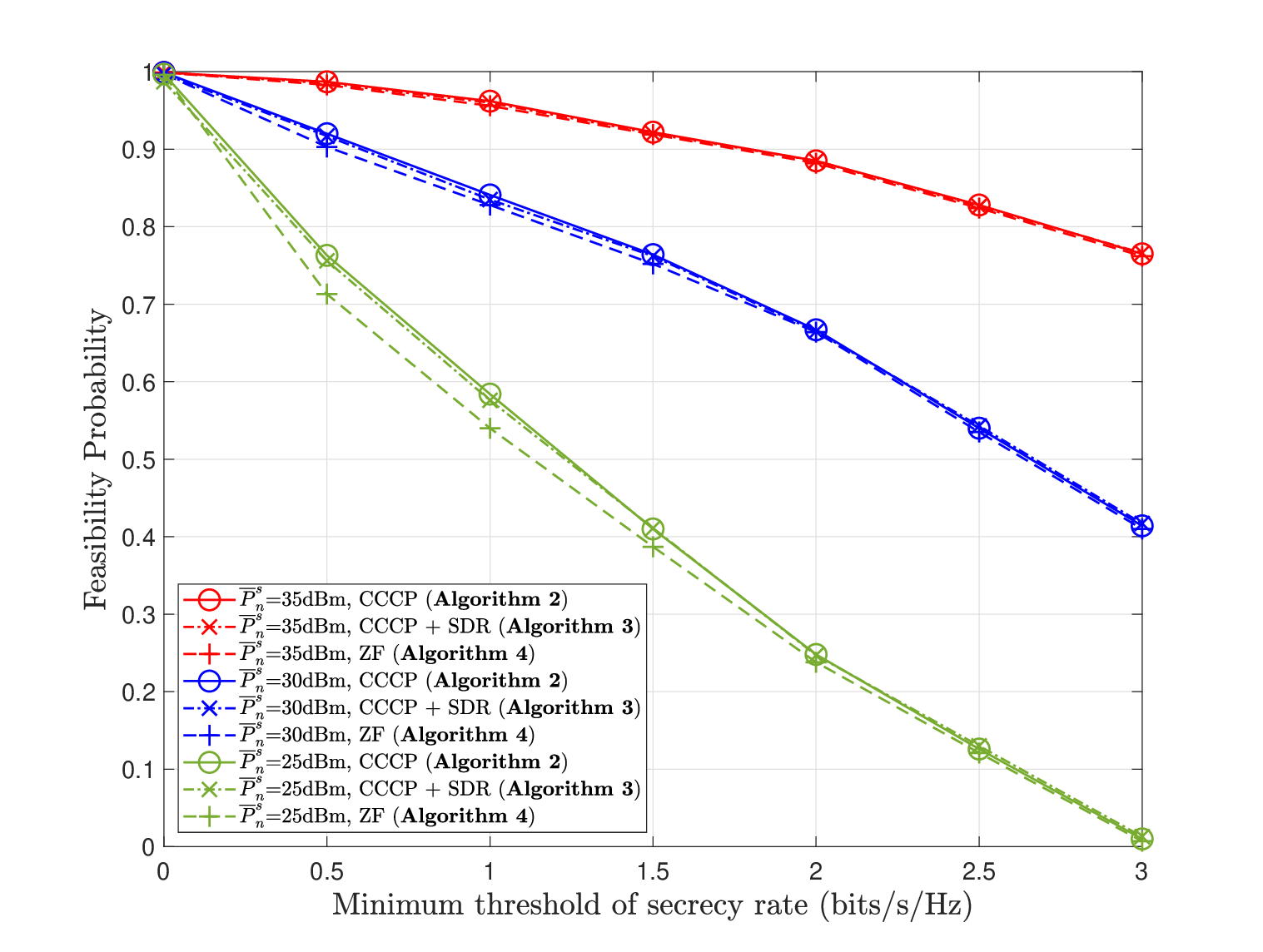}}
\caption{Feasibility probability of proposed designs versus minimum threshold of secrecy rate.}
\label{Feasibility_vs_threshold_power}
\end{figure}

First, the feasibility of the three design approaches is studied. For this purpose, 10,000 channel realizations are randomly generated. Then, the feasibility probability for each approach is calculated based on the number of realizations that the corresponding optimization problem is feasible.  
With respect to the minimum threshold of the secrecy rate $\lambda_k$, we illustrate in Fig.~\ref{Feasibility_vs_threshold_power} the feasibility probabilities for different values of the average optical power $\overline{P}_n^s$. Essentially, this is to evaluate the influences of \eqref{eqn:lambda_k_2} and \eqref{eqn:constraint38_2} on the feasibility of the design problem. Our obtained result shows almost no difference among the three designs, especially for larger values of $\overline{P}_n^s$. The slightly lower feasibility probability of the ZF design is due to the additional constraint in \eqref{ZFConstraint}.
We also observed that to ensure $\lambda_k$'s $=0.5$ bits/s/Hz and a feasibility of above 90\%, $\overline{P}_n^s$ should be chosen to be 30 dBm or higher. 

Furthermore, the feasibility probability is investigated for different numbers of LED transmitters and users. To do so, we examine three different configurations of LED luminaries, which are $2 \times 2$, $2 \times 3$, and $3 \times 3$ as illustrated in Fig.~\ref{LEDlayout}. The results, as displayed in Fig. \ref{Feasibility_vs_Nt_K}, clearly show that the feasibility increases following an increase in the number of transmitters and a decrease in the number of users. Here, to guarantee a feasibility of above 90\%, the following settings can be employed: $(N_T, K ) = (4, 3)$, $(6, 4)$, and $(9, 6)$.
\begin{figure*}[htbp]
\centering
\begin{subfigure}[b]{0.25\textwidth}
    \includegraphics[width=\textwidth]{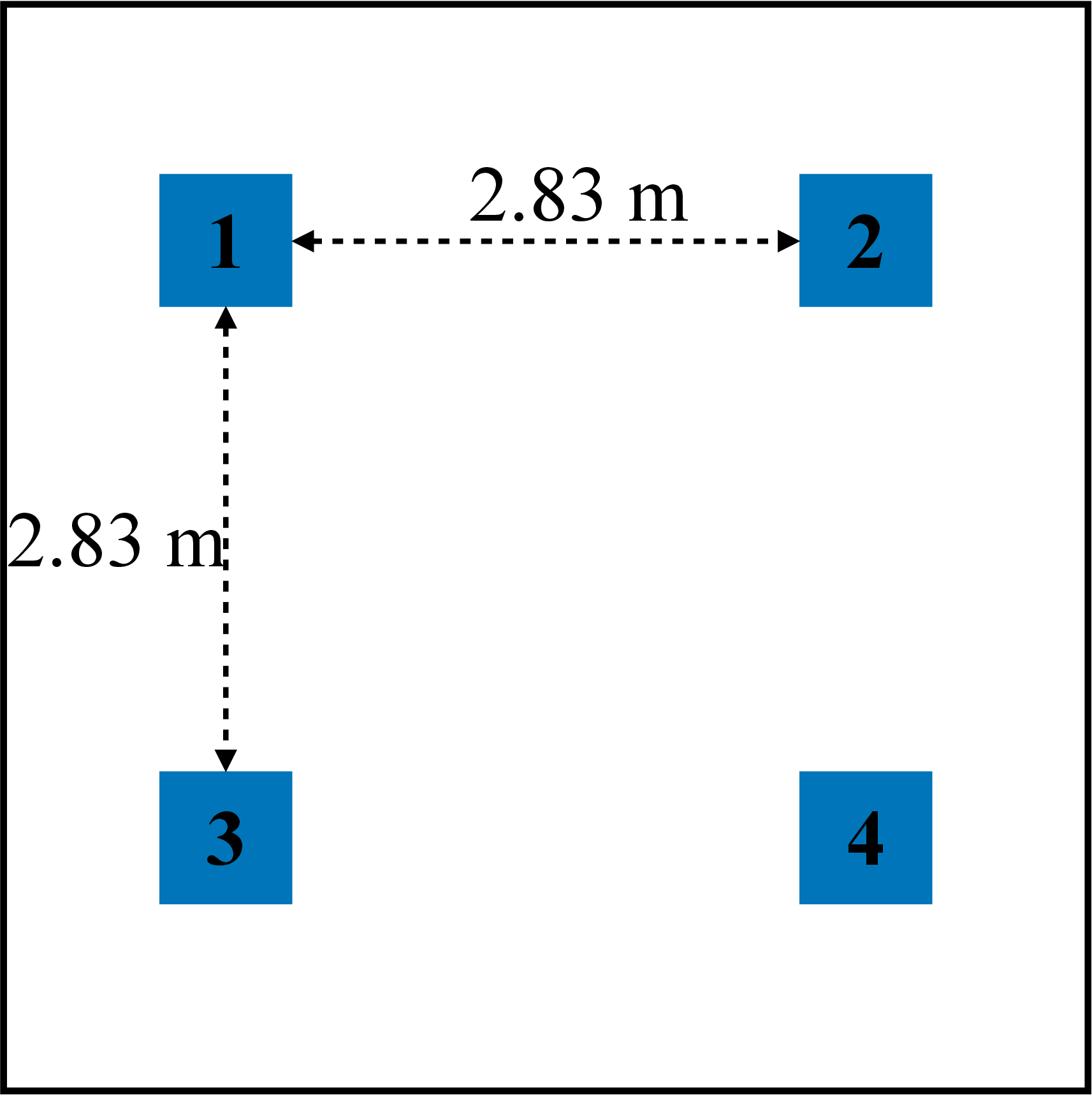}
    \caption{$2\times2$ layout.}
    \label{22}
\end{subfigure} \qquad\qquad
\begin{subfigure}[b]{0.25\textwidth}
    \includegraphics[width=\textwidth]{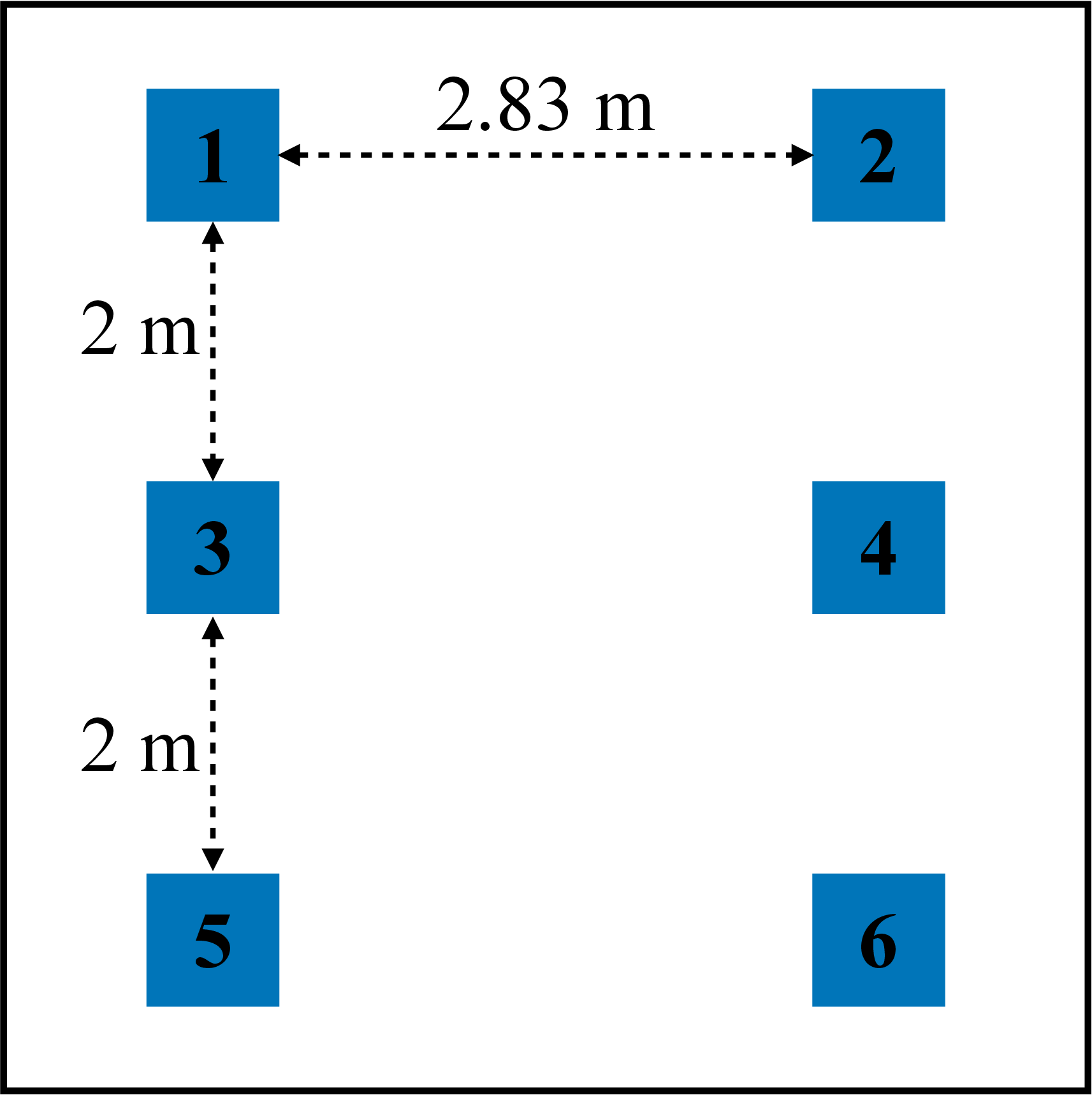}
    \caption{$2\times3$ layout.}
    \label{23}
\end{subfigure} \qquad\qquad
\begin{subfigure}[b]{0.25\textwidth}
    \includegraphics[width=\textwidth]{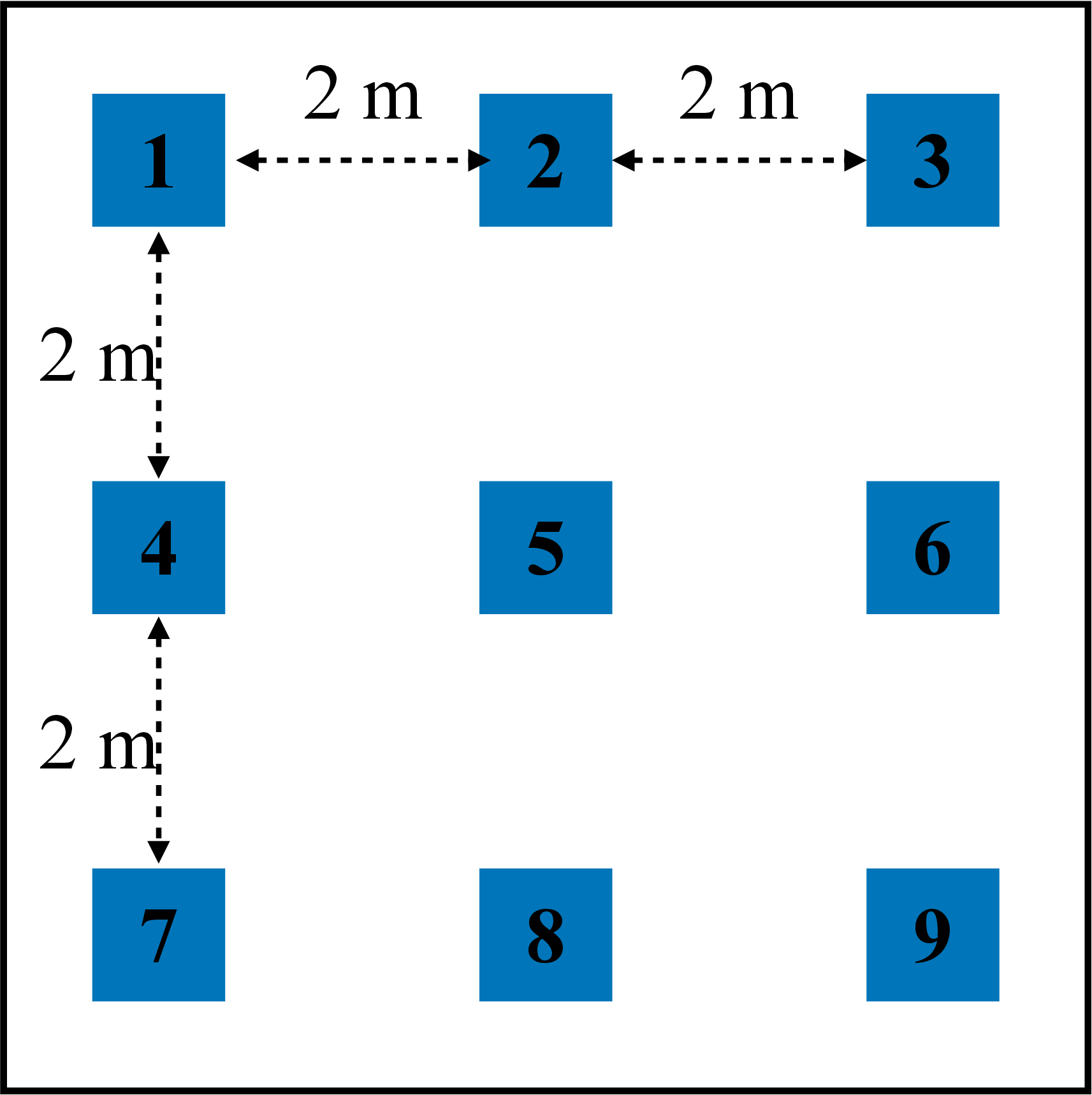}
    \caption{$3\times3$ layout.}
    \label{33}
\end{subfigure}
\caption{Different layouts of LED transmitters.}
\label{LEDlayout}
\end{figure*}
\begin{figure}[htbp]
\centerline{\includegraphics[width=12cm, height = 9cm]{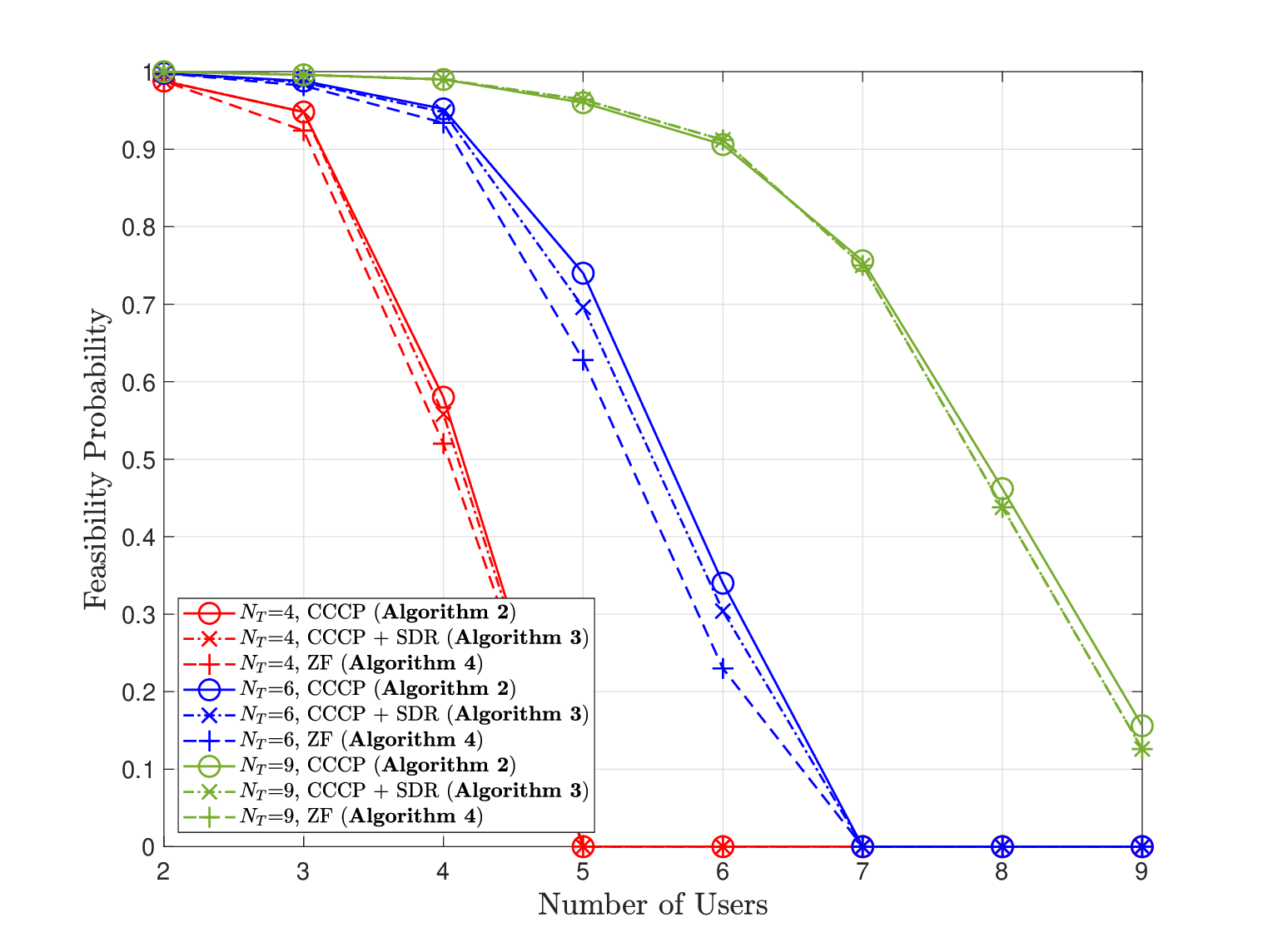}}
\caption{Feasibility probability of proposed designs among different numbers of $N_T$ and $K$.}
\label{Feasibility_vs_Nt_K}
\end{figure}

\textcolor{blue}{Second, we evaluate in Fig. \ref{EEvsIter} the convergence behaviors of three precoding schemes. Since all these proposed schemes employ \textbf{Algorithm \ref{alg.1}} as the outer loop, we differentiate them by referring to the algorithms used in the inner loop (i.e., \textbf{Algorithms \ref{alg.2}}, \textbf{\ref{alg.3}}, or \textbf{\ref{alg.4}}).} In this simulation, the results are obtained through averaging those from 10000 randomly generated channel realizations and are normalized with respect to the optimal value solved by CCCP. 
Furthermore, the maximum numbers of iterations in all scenarios are set to be 100. Because the size of the system (i.e., numbers of transmitters and users) influences the convergence of the proposed algorithms as analyzed in Section \ref{complexityAnalysis}, we examine three scenarios: $(N_T, K) = (4, 3)$, $(6, 4)$, and $(9, 6)$,  which guarantee at least 90\% feasibility. 

\begin{figure*}[htbp]
\centering
\begin{subfigure}[b]{0.32\textwidth}
    \includegraphics[width=60mm, height = 50mm]{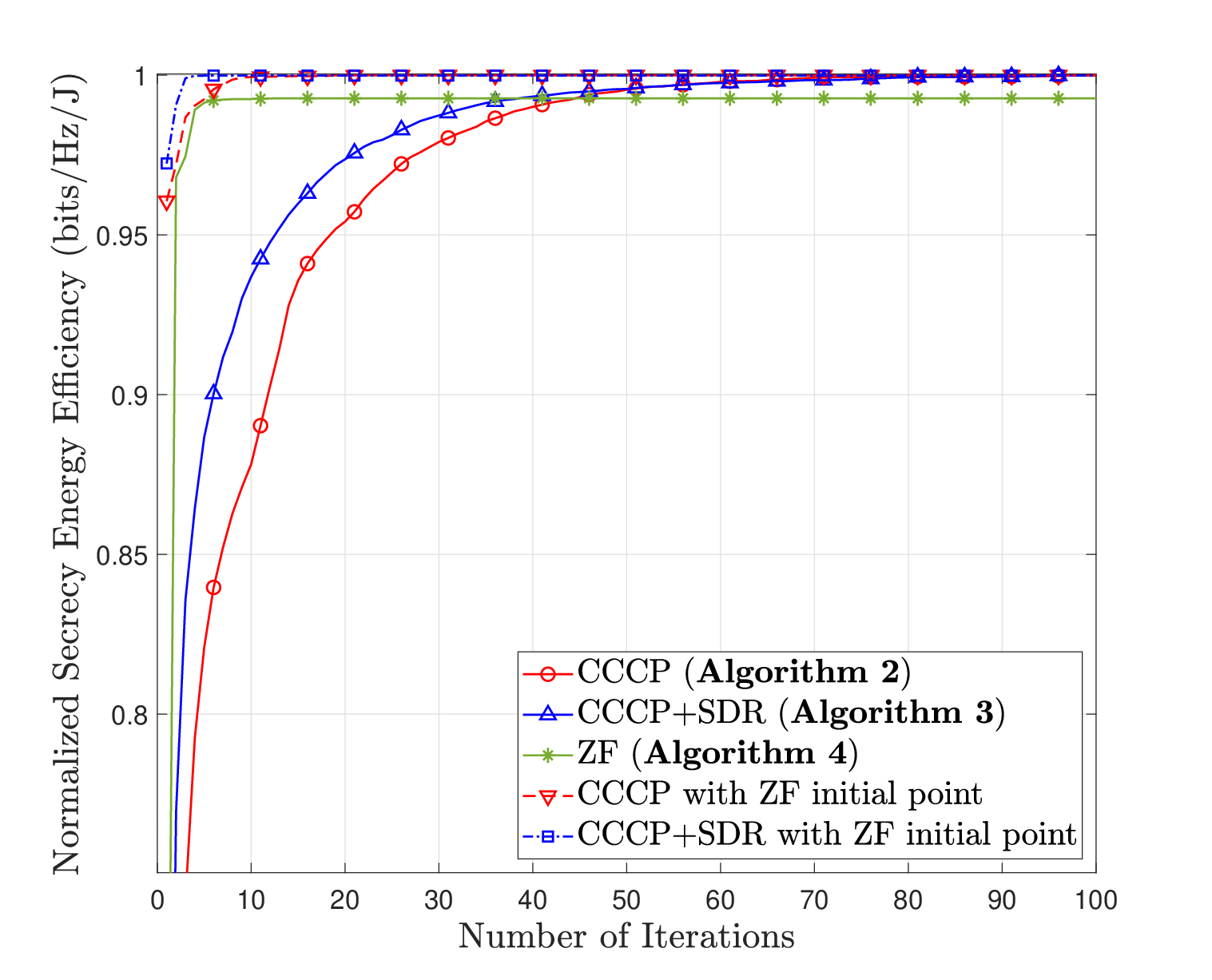}
    \caption{$N_T=4,K=3$}
    \label{EEvsIter4}
\end{subfigure}
\begin{subfigure}[b]{0.32\textwidth}
    \includegraphics[width=60mm, height = 50mm]{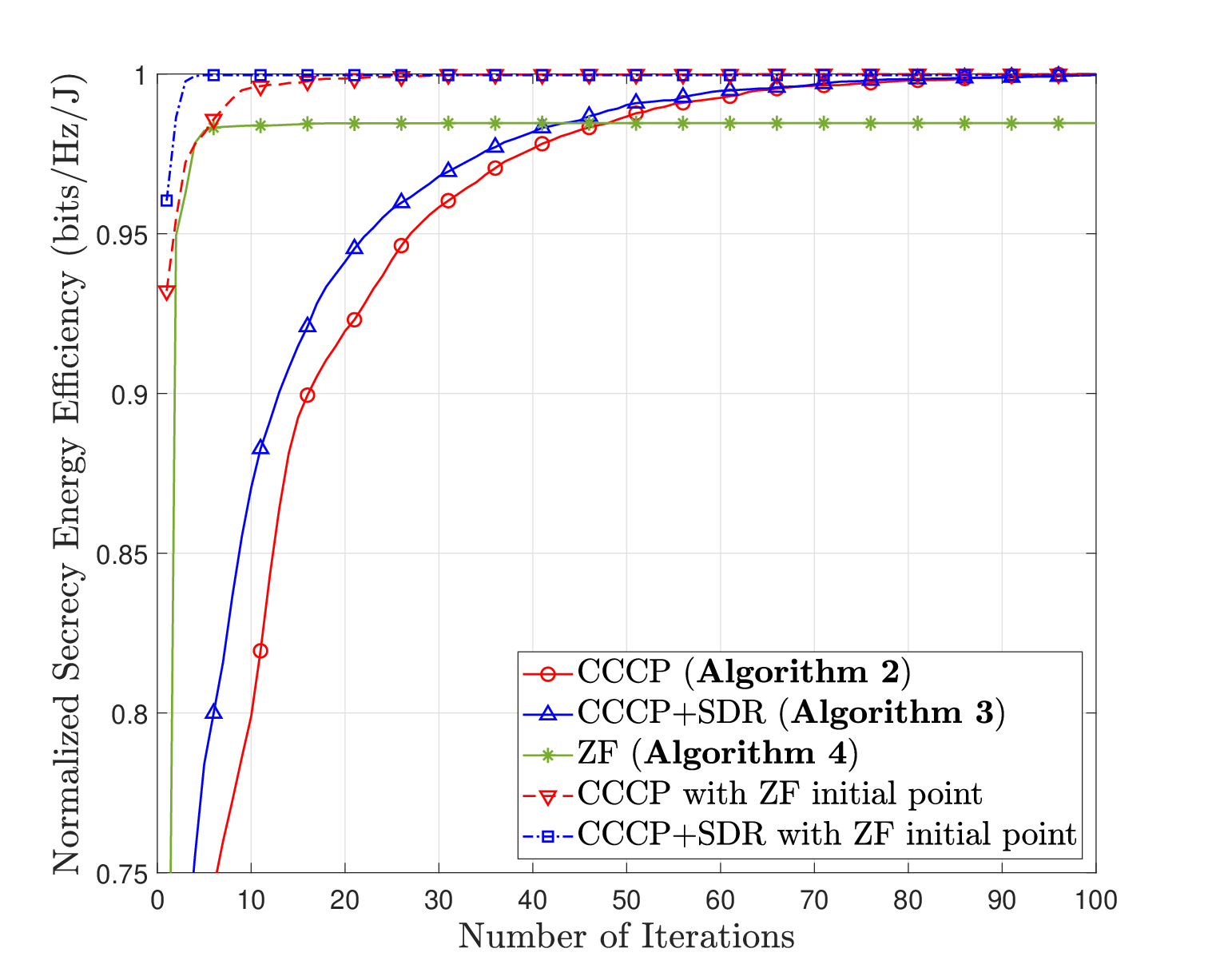}
    \caption{$N_T=6,K=4$}
    \label{EEvsIter6}
\end{subfigure}
\begin{subfigure}[b]{0.32\textwidth}
    \includegraphics[width=60mm, height = 50mm]{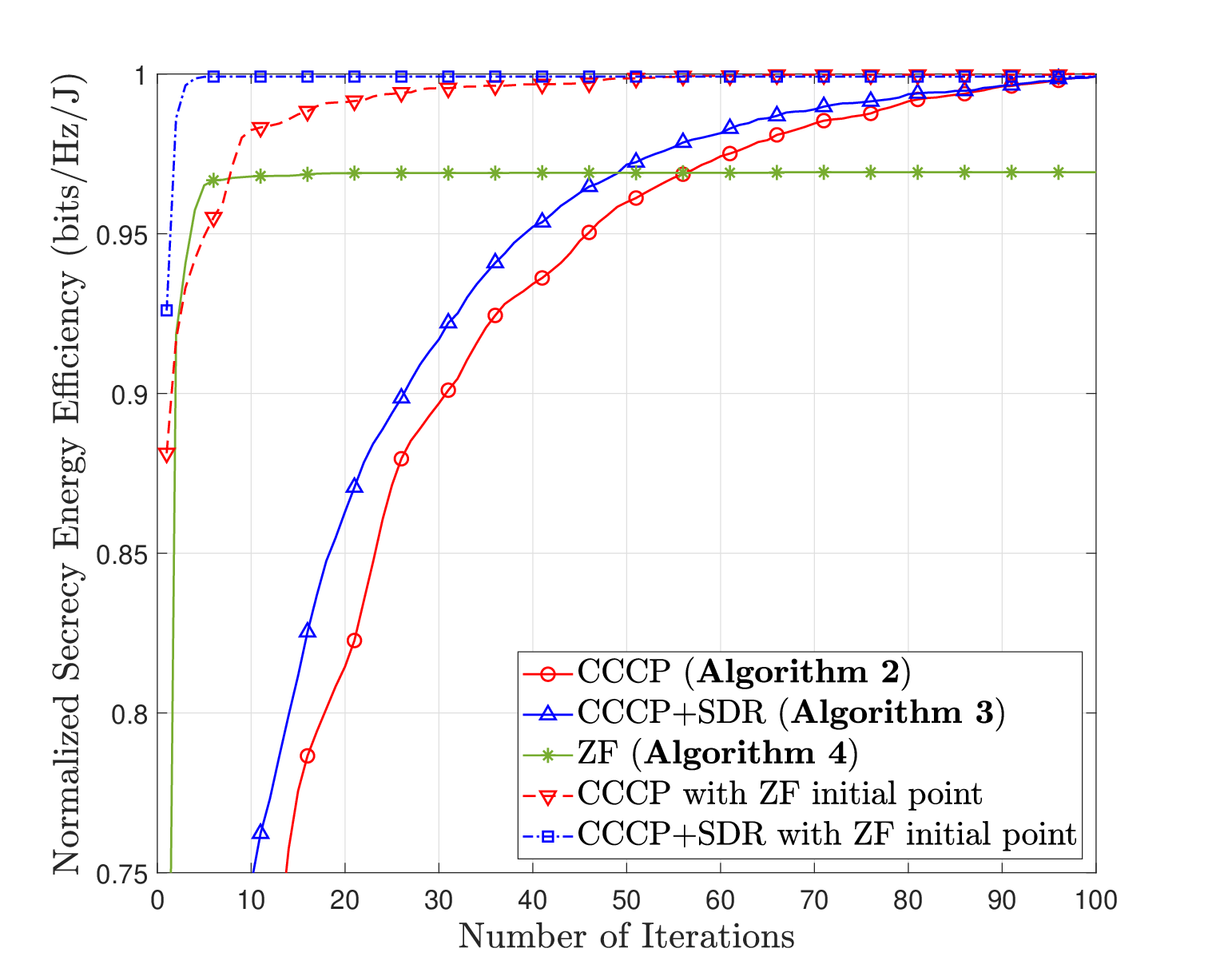}
    \caption{$N_T=9,K=6$}
    \label{EEvsIter9}
\end{subfigure}
\caption{Convergence behaviors of proposed algorithms for different numbers of $N_T$ and $K$.}
\label{EEvsIter}
\end{figure*}
\begin{table}[ht]
\caption{Average execution time (seconds) of each iteration.}
\centering
\begin{tabular}{|P{0.20\linewidth}|P{0.15\linewidth}|P{0.15\linewidth}|P{0.15\linewidth}|}
\hline
\backslashbox{Setting}{Approach} & CCCP \newline (\textbf{Algorithm \ref{alg.2}})  & CCCP+SDR \newline (\textbf{Algorithm \ref{alg.3}})& ZF  \newline (\textbf{Algorithm \ref{alg.4}}) \\
\hline
$N_T=4$ \& $K=3$  & 0.4975 & 0.3473 & 0.3236\\
\hline
$N_T=6$ \& $K=4$  & 0.5996 & 0.4328 & 0.3756\\
\hline
$N_T=9$ \& $K=6$  & 0.7884 & 0.6437 & 0.4594\\
\hline
\end{tabular}
\label{average_time_per_iter}
\end{table}
For the CCCP (\textbf{Algorithm \ref{alg.2}}) and the CCCP combined with SDR  approaches (\textbf{Algorithm \ref{alg.3}}), we first examine the case that the initial precoder $\mathbf{W}^{(0)}$ is chosen to be random. With this initialization, it is seen that both algorithms require considerably large numbers of iterations for the normalized energy efficiency to converge (i.e., approximately 60, 80, and 100 iterations for the $(N_T, K) = $(4, 3), (6, 4), and (9, 6) settings, respectively). 
\textcolor{blue}{Previous studies usually assumed random initial points, which might result in a large number of iterations, especially in our considered problem whose solution involves a nested iterative procedure. As we notice from the simulation results that the optimal solutions to \eqref{OptProb4} and \eqref{OptProb5} are usually near-ZF precoders, one thus can choose the initial point as a ZF precoder to speed up the convergence rate}. Using this initialization strategy, we observe significant improvements in the convergence rate. For $(N_T, K) = $(4, 3), (6, 4), and (9, 6), the CCCP approach needs about 8, 15, and 30 iterations, respectively. For the case of CCCP combined with SDR, these required iterations are all approximately 5. In terms of required CCCP iterations, this confirms the low complexity offered by using the SDR technique. However, one should emphasize that this reduced complexity comes from the relaxation of the rank constraints, which renders the approach an approximation in general.  
Moreover, it is shown in Table \ref{average_time_per_iter} that in practice for each iteration, the $\textbf{Algorithm \ref{alg.3}}$ achieves relatively lower average execution times compared to those of the $\textbf{Algorithm \ref{alg.2}}$ over all three settings\footnote{Simulations were conducted using Matlab 2015a, CVX with Mosek solver version 9.1.9 running on a Windows 10 computer with Intel Xeon E5-1603 v3, 16 GB RAM.}. 
In the case of ZF precoding, the same convergence behavior to that of the CCCP combined with the SDR approach using the ZF initial point is seen. Also, its average execution time is shown to be the lowest.
\begin{figure}[htbp]
\centerline{\includegraphics[width=12cm, height = 9cm]{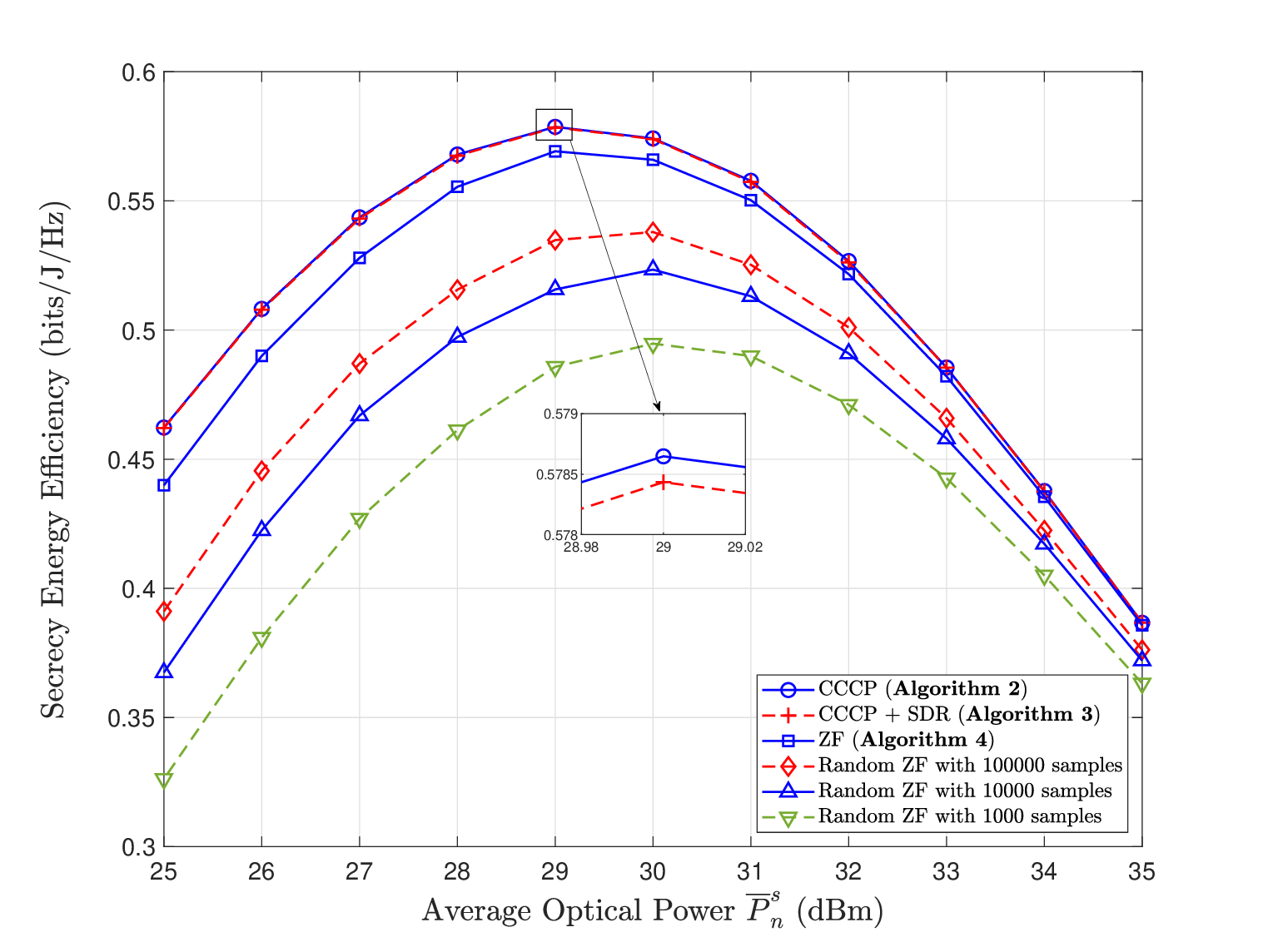}}
\caption{SEE versus average optical power $\overline{P}_n^s$ of each LED luminary.}
\label{EE_vs_pn}
\end{figure}

In Fig. \ref{EE_vs_pn}, we investigate the average SEE with respect to each LED luminary's average optical power. Despite the relaxation of the rank constraint, the CCCP combined with the SDR approach achieves virtually the same performance as the CCCP approach. This reveals that the solution to \eqref{OptProb5} can be either very near or rank-one. We also notice that the superiority of these two designs over the ZF one vanishes as $\overline{P}_n^s$ increases. As such, at the high optical power regime, one should employ the ZF design for the sake of low complexity while still being able to offer comparable performance to the optimal approaches.  
In addition to the three described precoding algorithms, we consider a ZF precoder selection scheme, which simply chooses the best precoder from a number of random ZF ones. The advantage of this scheme is that it does not involve any optimization procedure, thus further reducing the design complexity. The use of this selection scheme seems reasonable when $\overline{P}_n^s$ is high (e.g., $35$ dBm or higher) since its performance losses concerning the CCCP (and CCCP combined with SDR) are only 2.6\%, 3.9\%, and 5.2\% when the numbers of random samples are 100,000, 10,000, and 1,000, respectively.
\begin{figure}[htbp]
\centerline{\includegraphics[width=12cm, height = 9cm]{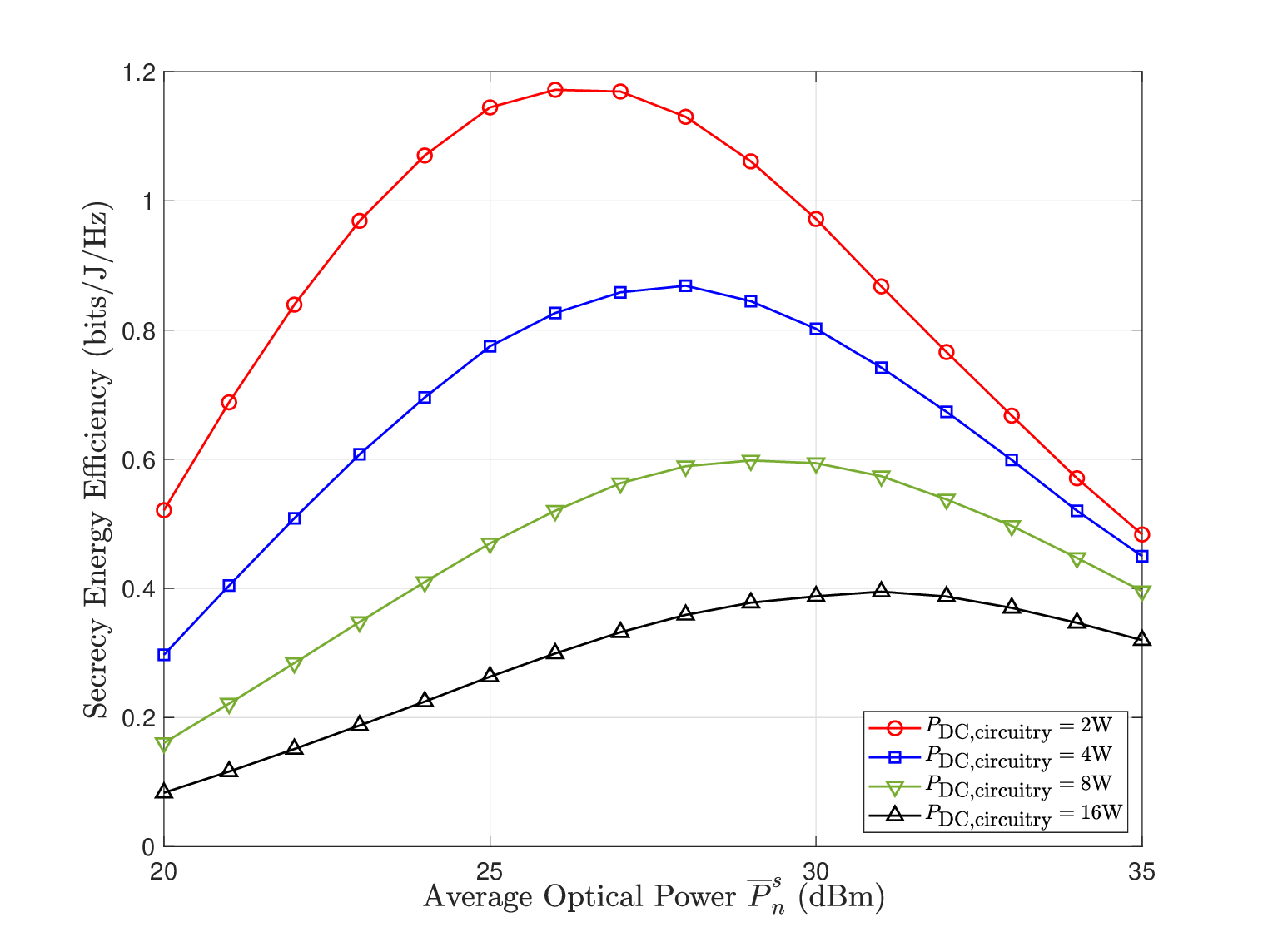}}
\caption{SEE versus average optical power $\overline{P}_n^s$ of each LED luminary for different $P_{\text{DC, circuitry}}$.}
\label{EE_vs_fixed_P_VLC}
\end{figure}

Figure \ref{EE_vs_fixed_P_VLC} illustrates the impact of the circuitry power consumption on the SEE performance. In this simulation, results of the CCCP combined with the SDR approach are used. As can also be seen from the previous figure, there exists optimal $\overline{P}^s_n$ where the SEE achieves its maximum value. This is because, at the low $\overline{P}^s_n$ region, the increase in the achievable secrecy sum-rate is dominant in improving the SEE. When $\overline{P}^s_n$ further increases, it becomes the dominant factor that reduces the SEE as the achievable secrecy sum-rate only logarithmically increases with $\overline{P}^s_n$. 
Interestingly, it is observed that the optimal $\overline{P}^s_n$ shifts to larger value as $P_{\text{DC, circuitry}}$ increases. For example, the optimal $\overline{P}^s_n$ values are 26, 28, 29, and 31 dBm when $P_{\text{DC, circuitry}} = $ 2, 4, 8, and 16 Watts, respectively.
\begin{figure}[htbp]
\centerline{\includegraphics[width=12cm, height = 9cm]{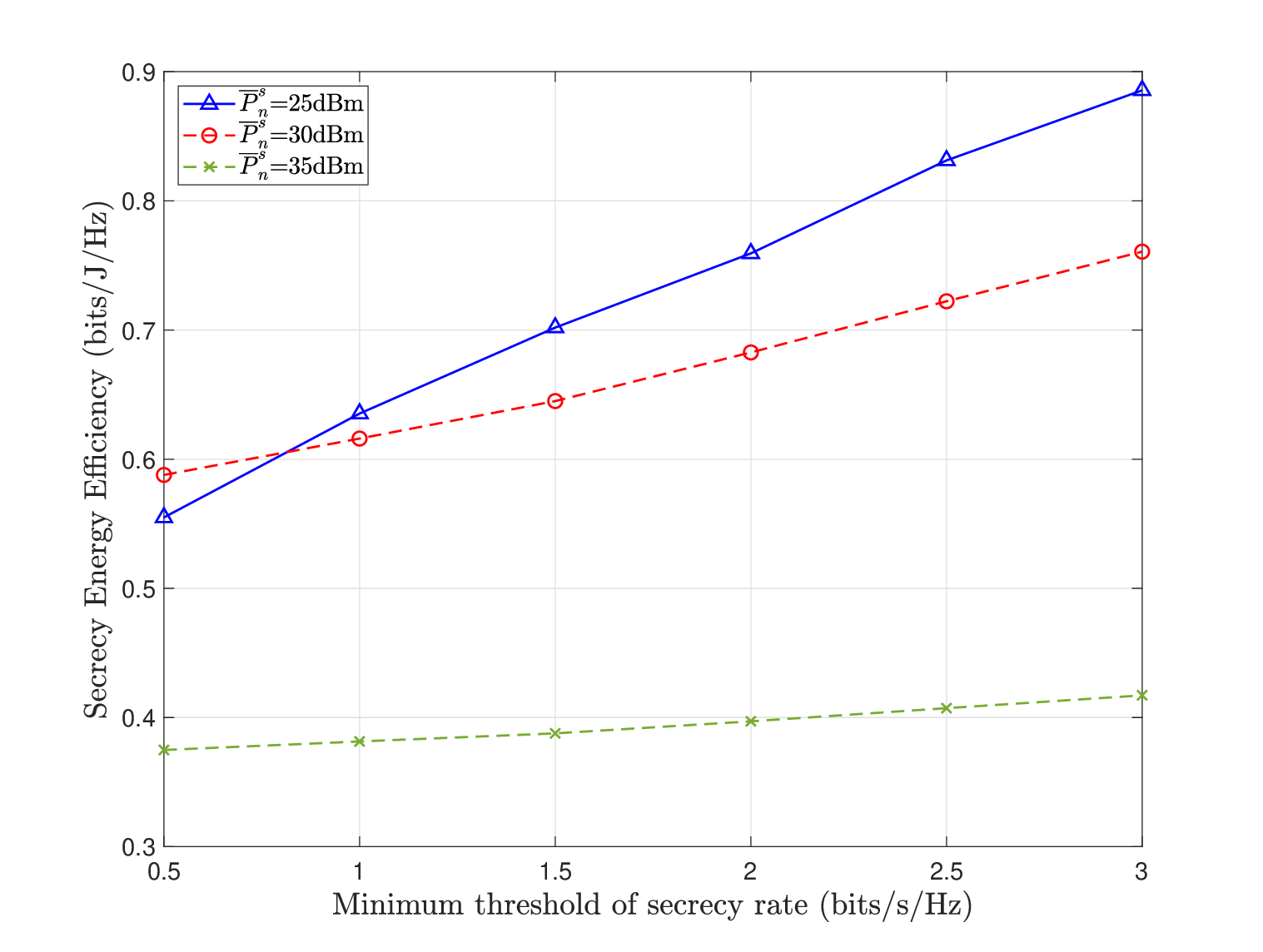}}
\caption{SEE versus minimum threshold of secrecy rate.}
\label{EE_vs_threshold}
\end{figure}

Finally, assume that all users require the same minimum threshold of secrecy rate (i.e., $\lambda_k$'s $= \lambda$) we assess in Fig. \ref{EE_vs_threshold} the average SEE in accordance with $\lambda$ for different values of the average optical power per LED luminary. Within the examined value range, the average SEE increases with an increase in $\lambda$. The rate of increase, nonetheless, is different with $\overline{P}^s_n$. For instance, when $\lambda$ increases from 1 to 3 bits/s/Hz, the average SEE increases by 40.5\%, 26.2\%, and 10.5\% for $\overline{P}^s_n = $ 25, 30, and 35 dBm, respectively. This result is intuitively obvious since at the high transmit optical power regime, increasing $\lambda$ (hence increasing the achievable secrecy sum-rate) does not lead to much EE improvement.


\section{Conclusion}
\label{sec:conclusion}
In this paper, we have addressed the problem of maximizing the SEE of multi-user VLC systems where users' messages are kept mutually confidential. Two different techniques based on the CCCP and SDR were presented to tackle the design problem. A sub-optimal low complexity design using ZF precoding was also investigated. Simulation results showed that although the CCCP combined with SDR had the highest theoretical worst-case complexity, it, in fact, achieves lower complexity in practice compared with the CCCP approach while providing the same performance. We also observed that the sub-optimal ZF design performance approached those of the CCCP and CCCP combined with SDR at the high transmit optical power regime. This motivated a random ZF scheme, which offered reasonable performance yet did not involve any optimization procedure.    

\begin{appendix}
In this appendix, we prove the lower bound of secrecy rate of MISO VLC system with confidential messages presented in (\ref{eqn:secrecy_rate}). 
For demodulation, the DC term in \eqref{eqn:received_signal} is filtered out. The remaining AC term can then be expressed as 
\begin{align}
    \overline{y}_k = \mathbf{h}_k^T \mathbf{w}_k d_k + \mathbf{h}_k\sum_{i=1,\ i\neq k}^{K} \mathbf{w}_i d_i + \mathbf{h}_k \mathbf{I}^{\text{DC}} + \overline{n}_k,
\end{align}
where $\overline{n}_k = \frac{n_k}{\gamma\eta}$. According to \cite{arfaoui2018secrecy}, a lower bound on the secrecy capacity of the $k$-th user in an MISO VLC wiretap system can be derived as follows:
\begin{equation}
\label{eqn:Ck1}
    \begin{split}
        C_k \geq \left[h\left(\overline{y}_k\right)-h\left(\overline{y}_k|d_k\right)-h\left(\hat{\mathbf{y}}_k|\hat{\mathbf{d}}_k\right)+h\left(\hat{\mathbf{y}}_k|d_k;\hat{\mathbf{d}}_k\right)\right]^+  ,
    \end{split}
\end{equation}
where $\hat{\mathbf{y}}_k=\begin{bmatrix}\overline{y}_1&\cdots&\overline{y}_{k-1}&\overline{y}_{k+1}& \cdots&\overline{y}_K\end{bmatrix}^T$ and $\hat{\mathbf{d}}_k=\begin{bmatrix}d_1 & \cdots & d_{k-1} & d_{k+1} & \cdots & d_k\end{bmatrix}^T$ are the vector $\overline{\mathbf{y}}$ and $\mathbf{d}$ having the $k$-th element removed.
\textcolor{blue}{In order to get a lower bound of $C_k$, we need to derive lower bounds of $h\left(\overline{y}_k\right)$ and $h\left(\hat{\mathbf{y}}_k|d_k;\hat{\mathbf{d}}_k\right)$, and  upper bounds of $h\left(\overline{y}_k|d_k\right)$ and $h\left(\hat{\mathbf{y}}_k|\hat{\mathbf{d}}_k\right)$.}


\textcolor{blue}{Firstly, using the entropy power inequality \cite{Shannon1948}, we obtain the lower bound of $h(\overline{y}_k)$ as}
\begin{align}
\label{eqn:y_k}
        h(\overline{y}_k) & = h\left(\sum_{i=1}^K \mathbf{h}_k^T \mathbf{w}_i d_i + \overline{n}_k\right)  \geq \frac{1}{2} \log_2 \left(\sum_{i=1}^{K} 2^{2h\left(\mathbf{h}_k^T \mathbf{w}_i d_i\right)} + 2^{2h(\overline{n}_k)}\right) \nonumber \\
        & = \frac{1}{2} \log_2 \left(\sum_{i=1}^K \left(\mathbf{h}_k^T \mathbf{w}_i\right)^2 2^{2h_{d}} + 2\pi e \overline{\sigma}_k^2\right) \nonumber \\
        & = \frac{1}{2} \log_2 \left(\sum_{i=1}^K \left(\mathbf{h}_k^T \mathbf{w}_i\right)^2 \frac{2^{2h_{d}}}{2\pi e \overline{\sigma}_k^2} + 1\right) + \frac{1}{2} \log_2 \left(2\pi e \overline{\sigma}_k^2\right).
\end{align}

\textcolor{blue}{Next, an upper bound of $h(\overline{y}_k|d_k)$ can be given by}
\begin{align}
\label{eqn:y_k|d_k}
        h\left(\overline{y}_k|d_k\right) & = h\left(\sum_{i=1,i\neq k}^K \mathbf{h}_k^T \mathbf{w}_i d_i + \overline{n}_k\right)
         \stackrel{(a)}{\leq} \frac{1}{2} \log_2 \left( 2\pi e \left( \sum_{i=1,i\neq k}^{K} \left(\mathbf{h}_k^T \mathbf{w}_i\right)^2 \sigma^2_d + \overline{\sigma}_k^2 \right) \right) \nonumber \\
        & = \frac{1}{2} \log_2 \left(\sum_{i=1,i\neq k}^K \left(\mathbf{h}_k^T \mathbf{w}_i\right)^2  \frac{2\pi e \sigma_d^2}{2\pi e \overline{\sigma}_k^2} + 1\right) + \frac{1}{2} \log_2 \left(2\pi e \overline{\sigma}_k^2\right).
\end{align}
where $(a)$ is due to the well-known result that $h(X) \leq \frac{1}{2}\log_2 \left(2\pi e \sigma^2_X\right)$ with $\sigma_X^2$ being the variance of the scalar random variable $X$. 

\textcolor{blue}{Similarly, using the inequality $h(\mathbf{X}) \leq \log_2 \left(\left(2\pi e\right)^n \det\left(\mathbf{K}\right)\right)$ with $\mathbf{X} = \begin{bmatrix}x_1 & x_2 & \cdots & x_n\end{bmatrix}^T$ and $\mathbf{K}=\mathbb{E}\left[\mathbf{X}\mathbf{X}^T\right]$ being the covariance matrix of $\mathbf{X}$, we have}
\begin{align}
\label{eqn:y_k|d_k2}
         h\left(\hat{\mathbf{y}}_k|\hat{\mathbf{d}}_k\right) 
         & = h\left(\hat{\mathbf{H}}_k \mathbf{w}_k d_k + \hat{\mathbf{n}}_k\right) \nonumber \\
        & \leq \frac{1}{2} \log_2 \left((2\pi e)^{K-1} \det\left(\mathbb{E}\left[\left(\hat{\mathbf{H}}_k \mathbf{w}_k d_k + \hat{\mathbf{n}}_k\right)\left(\hat{\mathbf{H}}_k \mathbf{w}_k d_k + \hat{\mathbf{n}}_k\right)^T\right] \right)\right) \nonumber \\
        & = \frac{1}{2} \log_2 \left((2\pi e)^{K-1} \det\left(\mathbb{E}\left[d_k^2\left(\hat{\mathbf{H}}_k \mathbf{w}_k\right)\left(\hat{\mathbf{H}}_k \mathbf{w}_k\right)^T + \hat{\mathbf{n}}_k \hat{\mathbf{n}}_k^T\right]\right)\right) \nonumber \\
        & = \frac{1}{2} \log_2 \left((2\pi e)^{K-1} \det\left(\sigma_d^2 \left(\hat{\mathbf{H}}_k \mathbf{w}_k\right)\left(\hat{\mathbf{H}}_k \mathbf{w}_k\right)^T + \text{diag}\left\{\hat{\pmb{\sigma}}^2_k\right\}\right)\right).
\end{align}
with $\hat{\mathbf{H}}_k = \begin{bmatrix}\mathbf{h}_1&\cdots&\mathbf{h}_{k-1}&\mathbf{h}_{k+1}&\cdots&\mathbf{h}_K\end{bmatrix}^T$, $\hat{\mathbf{n}}_k = \begin{bmatrix}\overline{n}_1&\cdots&\overline{n}_{k-1}&\overline{n}_{k+1}&\cdots&\overline{n}_K\end{bmatrix}^T$, and $\hat{\pmb{\sigma}}^2_k = \begin{bmatrix}\overline{\sigma}_1 & \cdots & \overline{\sigma}_{k-1}& \overline{\sigma}_{k+1}&\cdots&\overline{\sigma}_K\end{bmatrix}^T$. \textcolor{blue}{Then, using the matrix determinant lemma, i.e., $\det\left(\mathbf{A}+\mathbf{u}\mathbf{v}^T\right) = \det(\mathbf{A})\left(1+\mathbf{v}^T\mathbf{A}^{-1}\mathbf{u}\right)$,  for some invertible square matrix $\mathbf{A}$ and vectors $\mathbf{u}$, $\mathbf{v}$, results in}
\begin{align}
\label{eqn:y-k|u-k}
         h\left(\hat{\mathbf{y}}_k|\hat{\mathbf{d}}_k\right) 
         & \leq \frac{1}{2} \log_2 \left((2\pi e)^{K-1} \text{det}\left(\text{diag}\left\{\hat{\pmb{\sigma}}^2_k\right\}\right)\right. 
        \left.\left(1 + \sigma_d^2 \left(\hat{\mathbf{H}}_k \mathbf{w}_k\right)^T \text{diag}\left\{\hat{\pmb{\sigma}}^{-2}_k\right\} \left(\hat{\mathbf{H}}_k \mathbf{w}_k\right)\right) \right) \nonumber \\
        & = \frac{1}{2} \log_2 \left((2\pi e)^{K-1} \prod_{i=1,i \neq k}^K \overline{\sigma}_i^2\right) + \frac{1}{2} \log_2 \left(1 + \sum_{i=1,i\neq k}^K \left(\mathbf{h}_i^T \mathbf{w}_k\right)^2 \frac{\sigma_d^2}{\overline{\sigma}_i^2}\right).
\end{align}
\textcolor{black}{Finally, $h\left(\hat{\mathbf{y}}_k|d_k,\hat{\mathbf{d}}_k\right)$ is straightforwardly given by}
\begin{equation}
\label{eqn:y-k|u-k,d_k}
    \begin{split}
        h\left(\hat{\mathbf{y}}_k|d_k,\hat{\mathbf{d}}_k\right) & = h(\hat{\mathbf{n}}_k)  = \frac{1}{2} \log_2 \left((2\pi e)^{K-1} \prod_{i=1,i \neq k}^K \overline{\sigma}_i^2\right).
    \end{split}
\end{equation}
By combining (\ref{eqn:y_k}), (\ref{eqn:y_k|d_k2}), (\ref{eqn:y-k|u-k,d_k}), and (\ref{eqn:y-k|u-k}) and denoting $a_k = \frac{2^{2h_{d}}}{2\pi e \overline{\sigma}_k^2}$ and $b_k = \frac{\sigma_d^2}{\overline{\sigma}_k^2}$, we obtain the expression in \eqref{eqn:secrecy_rate}.
\end{appendix}

\bibliographystyle{IEEEtran}
\bibliography{references}

\end{document}